\documentclass{article}

\usepackage[preprint]{neurips_2026}

\usepackage{todonotes}
\usepackage{comment}

\usepackage[utf8]{inputenc}
\usepackage[T1]{fontenc}
\usepackage{microtype}

\usepackage[main=UKenglish]{babel}
\usepackage[strict]{csquotes}

\usepackage{paralist}
\usepackage{enumitem}

\usepackage{wrapfig}
\usepackage[caption=false]{subfig}

\usepackage{pdflscape}

\usepackage{tikz}
\usetikzlibrary{arrows.meta, positioning, shadows}

\usepackage{pgfplots}
\pgfplotsset{compat=1.18}

\usepackage[export]{adjustbox}

\usepackage{amsfonts}
\usepackage{amssymb}
\usepackage{amsmath}
\usepackage{amsthm}
\usepackage{bbm}
\usepackage{aliascnt}

\usepackage{pifont}

\theoremstyle{plain}
\newtheorem{theorem}{Theorem}[section]

\newaliascnt{proposition}{theorem}
\newtheorem{proposition}[proposition]{Proposition}
\aliascntresetthe{proposition}

\newaliascnt{lemma}{theorem}

\aliascntresetthe{lemma}

\newaliascnt{fact}{theorem}

\aliascntresetthe{fact}

\newaliascnt{property}{theorem}

\aliascntresetthe{property}

\theoremstyle{definition}
\newaliascnt{remark}{theorem}
\newtheorem{remark}[remark]{Remark}
\aliascntresetthe{remark}

\newaliascnt{definition}{theorem}
\newtheorem{definition}[definition]{Definition}
\aliascntresetthe{definition}

\newaliascnt{notation}{theorem}

\aliascntresetthe{notation}

\newaliascnt{construction}{theorem}

\aliascntresetthe{construction}

\newaliascnt{example}{theorem}
\newtheorem{example}[example]{Example}
\aliascntresetthe{example}

\newaliascnt{counterexample}{theorem}

\aliascntresetthe{counterexample}

\newaliascnt{nonexample}{theorem}

\aliascntresetthe{nonexample}

\usepackage{derivative}
\derivset{\pdv}[fun=true] %,switch-*=true]

\usepackage{nicefrac}

\usepackage{stmaryrd}
\usepackage{siunitx}
\DeclareSIUnit\feet{ft}

\usepackage{caption}
\usepackage{tabularray}
\UseTblrLibrary{amsmath, booktabs, siunitx}

\NewTblrTheme{paper}{
  \SetTblrStyle{caption}{font=\small}
  \SetTblrStyle{caption-tag}{font=\small\bfseries}
  \SetTblrStyle{note}{font=\footnotesize}
}

\usepackage[table]{xcolor}
\usepackage{hhline}

\input{Definitions}
\usepackage[all]{nowidow}
\usepackage[lastparline]{impnattypo}

\usepackage{centernot}

\usepackage[thicklines]{cancel}

\usepackage{tikz}

\usetikzlibrary{cd, arrows}
\usetikzlibrary{decorations.pathreplacing}

\usepackage{fonttable}
\usepackage{wasysym}
\usepackage{expl3}
\usepackage{xparse, xpatch}

\usepackage{tabularx, multirow}
\usepackage{stackrel, stackengine, old-arrows}

\usepackage{ebproof}

\usepackage{mathrsfs, amsfonts, stmaryrd, mathtools}
\usepackage{amsopn}

\usepackage{quiver}

\usepackage{xspace}

\newenvironment{eqalign*}{\begin{equation*}\begin{aligned}}{\end{aligned}\end{equation*}}

\tikzset{
  relation/.style={
    draw=none,
    every to/.append style={
      edge node={node [sloped, allow upside down, auto=false]{$#1$}}}
  }
}

\tikzcdset{
  mono/.code={
    \pgfsetarrows{tikzcd to reversed-tikzcd to}
  }
}
\tikzcdset{
  epi/.code={
    \pgfsetarrowsend{tikzcd double to}
  }
}
\tikzcdset{
  into/.code={
    \pgfsetarrows{tikzcd right hook-tikzcd to}
  }
}
\tikzcdset{
  twocell/.style={Rightarrow, shorten >= 3ex, shorten <= 3ex}
}

\newcommand{\Overset}[2]{%
  \mathop{#2}\limits^{\vbox to -.1ex{%
  \kern -1.8ex\hbox{$#1$}\vss}}%
}
\newcommand{\Underset}[2]{%
  \mathop{#2}\limits_{\vbox to .1ex{%
  \kern -.6ex\hbox{$#1$}\vss}}%
}

\mathchardef\dash="2D

\renewcommand{\exp}{\operatorname{exp}}

\newcommand{\word}[1]{\quad\text{\underline{#1}}\quad}

\newcommand{\impl}{\word{implies}}

\newcommand{\biginf}{\bigwedge}

\newcommand{\psum}[2][p]{\mathchoice{{\bigoplus_{#2}}^{#1}}{\bigoplus_{#2}^{#1}}{\bigoplus_{#2}^{#1}}{\bigoplus_{#2}^{#1}}}

\newcommand{\phsum}[2][p]{\psum[-#1]{#2}}

\newcommand{\Reals}{[-\infty,+\infty]}
\newcommand{\ExtReals}{\Reals}
\newcommand{\PosReals}{[0,\infty]}

\newcommand{\MulReals}{\PosReals_{\Mtensor}}
\newcommand{\AddReals}{\ExtReals_{\Atensor}}

\newcommand{\mulimp}{\multimap}

\newcommand{\One}{\literal{0}}
\newcommand{\true}{{\bf true}}
\newcommand{\false}{{\bf false}}

\newcommand{\literal}[1]{\mathbf{#1}}

\newcommand{\sem}[2][]{\llbracket #2 \rrbracket_{#1}}

\newcommand{\red}[1]{\textcolor{red}{#1}}

\newcommand{\pQLL}[1][p]{\ensuremath{#1\text{QLL}}\xspace}
\newcommand{\pFor}[1][p]{\mathbin{\lor^{#1}}}
\newcommand{\pFand}[1][p]{\mathbin{\land^{#1}}}
\newcommand{\For}{\pFor[]}
\newcommand{\Fand}{\pFand[]}
\newcommand{\Fnot}[1]{\neg #1}

\newcommand{\pSor}[1][p]{\color{red}{\pAor}}
\newcommand{\pSand}[1][p]{\color{red}{\pAand}}

\newcommand{\pMor}[1][p]{\mathbin{+^{#1}}}

\newcommand{\Mtensor}{\mathbin{\times}}

\newcommand{\pAor}[1][p]{\mathbin{\cup^{#1}}}
\newcommand{\pAand}[1][p]{\mathbin{\cap^{#1}}}

\newcommand{\Atensor}{\mathbin{+}}

\newcommand{\booleize}{\operatorname{!}}

\newcommand{\grammareq}{\ \Coloneqq \ }
\newcommand{\grammarsep}{\ | \  }

\newcommand*{\oracle}[2]{\red{a}}

\renewcommand{\epsilon}{\varepsilon}

\usepackage{acronym}

\acrodef{ML}{machine learning}
\acrodef{NeSy}{neuro-symbolic}
\acrodef{DL2}{Deep Learning with Differentiable Logics}
\acrodef{PGD}{Projected Gradient Descent}
\acrodef{FGSM}{Fast Gradient Sign Method}
\acrodef{VNNCOMP}[VNN-COMP]{Neural Network Verification Competition}
\acrodef{FOL}{first-order logic}
\acrodef{STL}{Signal Temporal Logic}
\acrodef{LTL}{Linear Temporal Logic}
\acrodef{SBR}{Semantic-based Regularisation}
\acrodef{SPL}{Semantic Probabilistic Layer}
\acrodef{DRL}{Disjunctive Refinement Layer}
\acrodef{PAL}{Probabilistic Algebraic Layer}
\acrodef{LDL}{Logic of Differentiable Logics}
\acrodef{RMSE}{Root Mean Squared Error}
\acrodef{PAcc}{Prediction Accuracy}
\acrodef{CAcc}{Constraint Accuracy}
\acrodef{CSec}{Constraint Security}
\acrodef{CSat}{Verified Constraint Satisfaction}
\acrodef{GTSRB}{German Traffic Sign Recognition Benchmark}
\acrodef{QLL}{Quantitative Linear Logic}
\acrodef{SRA}{Soft Residuated Algebra}
\acrodef{SMT}{Satisfiability Modulo Theories}
\acrodef{LL}{Linear Logic}

\makeatletter
\AtBeginDocument
 {
	 \def\ltx@label#1{\cref@label{#1}}%add braces
	 \def\label@in@display@noarg#1{\cref@old@label@in@display{#1}}%remove braces
\def\label@in@mmeasure@noarg#1{%
		\begingroup%
			\measuring@false%
			\cref@old@label@in@display{#1}%remove braces for multline, see https://tex.stackexchange.com/q/737204/2388
		\endgroup}%
 } %
\makeatother

\usepackage{hyperref}
\usepackage[capitalize, nameinlink]{cleveref}

\crefname{diagram}{Diagram}{Diagrams}
\crefname{rule}{Rule}{Rules}
\crefname{proof}{Derivation}{Derivations}

\usepackage{url}

\newcommand{\dl}[2]{\llbracket #1\rrbracket_{\mathrm{#2}}}

\DeclareMathOperator*{\argmin}{arg\,min}

\DeclareMathOperator*{\expected}{\mathbb{E}}

\usepackage{bm}
\renewcommand{\vec}{\bm}

\title{
	Quantitative Linear Logic for Neuro-Symbolic Learning and Verification
}

\author{%
	Thomas Flinkow\\
	Maynooth University, Maynooth, IE\\
	\texttt{thomas.flinkow@mu.ie} \\
	\And
	Ekaterina Komendantskaya\\
	University of Southampton, Southampton, UK\\
	\texttt{ek1u23@soton.ac.uk} \\
	\And
	Matteo Capucci\\
	University of Strathclyde, Glasgow, UK\\
	Independent Researcher, Modena, IT\\
	\texttt{matteo.capucci@gmail.com} \\
    \And
	Rosemary Monahan\\
	Maynooth University, Maynooth, IE\\
	\texttt{rosemary.monahan@mu.ie} \\
}

\begin{document}

\maketitle

\begin{abstract}
    Differentiable Logics are deployed in neuro-symbolic learning tasks as a way of embedding logical constraints in the training objective of neural networks.
    A differentiable logic consists of a syntax to write logical properties and a semantics to interpret them as real-valued functions to be folded in the loss function.
    A defining trade-off of the field is that between logical properties of the connectives, and analytic concerns for the semantics, with both aspects being relevant in applications.
    At one extreme we find fuzzy logics, that have well-established algebraic and proof-theoretic foundations, and at the other ad-hoc differentiable logics like Fischer's DL2, conceived for deep learning applications.
    However, no satisfactory foundation has emerged yet.
    We propose a resolution to this long-standing tension via a novel logic, \ac{QLL}, with foundational ambitions.
    Our design is driven by \emph{naturality}---the idea that, since logical constraints are translated to losses, the semantics of the connectives should be pertinent operations used in ML practice (that is, sum and log-sum-exp) on additive quantities (like logits).
    We then judge the result on two aspects: \emph{logical adequacy}---that they satisfy most of the standard logical laws of Linear Logic; and \emph{empirical effectiveness}---test-time performance (as measured by adversarial attacks) is well-correlated to the actual verification of the logical constraints (as measured by off-the-shelf neural network verifiers), which makes \ac{QLL} stand out among SoTA techniques.
\end{abstract}

\section{Introduction}
\label{sec:introduction}

Neuro-symbolic machine learning (NeSy ML) is a term that refers to methods that combine data-driven optimisation (``the neural") with logical reasoning (``the symbolic").
The slogan \enquote{The future is neuro-symbolic}~\citep{Belle_Marcus_2026} is evidenced by a range of notable successes, from the  flagship project on Alpha Geometry~\citep{trinhSolvingOlympiadGeometry2024} to numerous compendia of practical neuro-symbolic methods~\citep{NeSY23}.

Neuro-symbolic tasks occur in every area of machine learning.
In cyber-physical systems, for example,
models such as the famous ACAS Xu~\citep{katzReluplexEfficientSMT2017,katzMarabouFrameworkVerification2019} are obtained by training a neural aircraft controller to avoid mid-air collision in a way that is optimised relative to the prior data, but at the same time, ensuring that certain unsafe decisions never occur by mistake.
These unsafe scenarios exist as \emph{prior knowledge} of the engineers, and can be formally specified and verified as logical constraints that restrict the neural network output for certain regions of the input space.
%, prior to the deployment of such controllers.
In computer vision, logical constraints arise when classes of images naturally organize hierarchically.
% For example, in the \ac{GTSRB} data set~\citep{stallkampGermanTrafficSign2011}, road signs can be grouped into \emph{superclasses}: \enquote{speed limits}, \enquote{warning signs}, etc.
In Fashion-MNIST~\citep{xiaoFashionMNISTNovelImage2017}, `footwear' and `clothing' form \emph{superclasses}.
Then an \emph{`anti-misclassification'}  constraint may then assert that misclassifications inside the same superclass are preferable to superclass misclassifications.
Finally, \emph{(adversarial) robustness}~\citep{szegedyIntriguingPropertiesNeural2014}, a problem that affects all areas of ML from computer vision to natural language processing, arguably also boils down to a logical constraint~\citep{casadioNeuralNetworkRobustness2022} which forces the model's predictions to exceed a certain threshold (\emph{classification robustness}) or to remain robust to small perturbations of the dataset.
The requirement for classification to remain the same for all data within small $\varepsilon$-balls of the data set samples is a requirement that is, in essence, extraneous to the question of optimising the model for the data as given.

 \begin{wrapfigure}[11]{R}{0.5\linewidth}
    \centering
    \subfloat[]{%
        \includegraphics[width=.175\linewidth]{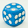}%
    }\hfil
    %\subfloat[]{%
    %    \includegraphics[width=.15\linewidth]{figures/066.png}%
    %}\hfil
    \subfloat[]{%
        \includegraphics[width=.175\linewidth]{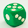}%
    }\hfil
    \subfloat[]{%
        \includegraphics[width=.175\linewidth]{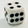}%
    }\hfil
    \subfloat[]{%
        \includegraphics[width=.175\linewidth]{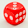}%
    }
    \caption{Images used as a neuro-symbolic multi-label classification training benchmark.
    The classifier is trained to recognise visible faces of a die.
    }
    \label{fig:dice-images}
\end{wrapfigure}
Then these three major categories of the NeSy ML tasks (reflecting prior knowledge of the physical world, encoding class relations, and robustness) can be mixed in a variety of ways.
For example, consider a computer vision task in which we train a multi-label classifier $f\colon\real^{3\times 28\times 28}\to\real^6$ to recognise the pips on the visible faces of a die from a $28\times 28$ RGB image.
We have some \emph{prior knowledge} that some combinations of visible faces are physically impossible.
We can form \emph{superclasses} of permissible combinations (e.g. $(2,4,6)$, $(4,2,6)$, $(6,2,4)$, etc. form one superclass), and we can specify a logical constraint that says that within \emph{$\varepsilon$-vicinity} of each image in the data set, the classification will only fall within a permissible superclass.
Thus, in addition to using visual data (see~\cref{fig:dice-images}), we want to incorporate prior knowledge about relationships between visible faces during training, i.e. we want to teach the classifier to predict only physically possible combinations of pips.

It has been noticed long ago (most famously, but not exclusively, in~\citep{szegedyIntriguingPropertiesNeural2014}) that models trained only on data fail to satisfy logical constraints.
This gave rise to different flavors of \emph{property-driven training}~\citep{giunchigliaDeepLearningLogical2022}. %is inspired by the formal  verification of neural networks.
The goal of this group of methods is to establish a balance between prediction performance and satisfaction of the chosen logical constraint, irrespective of whether it is entailed by the data or not.
Eventually, we may want to guarantee (i.e. formally verify) that the model adheres to the property (see \emph{Related work} below).
Some safety-critical systems with neural network components are modelled, verified, and trained in that way.

\emph{Differentiable logic} (DL) is one specific methodology for property-driven training.
Unlike other methods surveyed e.g. by~\citet{giunchigliaDeepLearningLogical2022}, it does not make any assumption about the architecture of the model or the kind of data, and can be used to augment any optimisation task.
Given a property $\phi(\theta)$ that concerns an ML model $f_\theta$ with trainable parameters $\theta$, the goal is to define a loss function $\dl{\phi}{}(\theta)$ that can be optimised to find $\theta^*$ satisfying $\phi$.
In principle, $\phi$ can be an arbitrary propositional formula that includes logical connectives \emph{and} ($\land$), \emph{or} ($\lor$), \emph{negation} ($\neg$), etc.---a DL then interprets such connectives as operations over suitable subsets of the reals.

\acfi{DL2}~\citep{fischerDL2TrainingQuerying2019} was one of the first differentiable logics proposed specifically for ML.
`Robustness metrics' for \ac{STL} (a temporal extension of classical propositional logic) were proposed by~\citet{varnaiRobustnessMetricsLearning2020}, with connectives adapted specifically to the gradient-based optimisation, and further investigated in~\citep{chevallierGradSTLComprehensiveSignal2025}.
Fuzzy logics are also natural candidates, given their established logical tradition and well-understood proof theory---indeed the usage of well-studied \emph{$t$-norms}, which give semantics to fuzzy logic connectives, was investigated by~\citet{vankriekenAnalyzingDifferentiableFuzzy2022}.

\begin{table}[t]
    \definecolor{tblgood}{RGB}{200,232,191}
    \definecolor{tblbad}{RGB}{248,178,178}
    \definecolor{tblwarn}{RGB}{251,228,153}

    \newcommand{\yes}{\ding{51}}%
    \newcommand{\no}{\ding{55}}%
    \newcommand{\na}{n/a}

    \footnotesize

    \centering
    \begin{talltblr}
    [
        theme=paper,
        caption={Comparison of differentiable logics on logical and analytic features.},
        label={tab:dl-feature-comparison},
        note{a}={Defined only at syntax level, not as a semantic operation.},
        note{b}={These logics do not feature a varying family of connectives.},
        note{c}={We only remark on analytic properties of the subset of connectives chosen for training.},
        note{d}={Shadow-lifting defined on non-extremal values.}
    ]
    {
      width=\linewidth,
      %vlines, hlines,
      colspec={Q[c,cmd=\rotatebox{90},wd=.2cm]X[l]*{8}{Q[c]}},
      row{1}={guard, font=\bfseries, mode=text},
      row{2}={guard, mode=text},
      column{1}={font=\bfseries, valign=m},
      cell{3}{3}={bg=tblbad},
      cell{3}{4-6}={bg=tblgood},
      cell{3}{7-8}={bg=tblbad},
      %cell{3}{9}={bg=tblgood},
      cell{4}{7}={bg=tblwarn},
      cell{5}{7}={bg=tblbad},
      cell{5-6}{6}={bg=tblbad},
      cell{6}{7,8}={bg=tblbad},
      cell{7}{3,4,6}={bg=tblbad},
      %cell{8}{3,4,6}={bg=tblbad},
      cell{8}{6}={bg=tblbad},
      cell{9}{3-7}={bg=tblbad},
      cell{3}{9}={bg=tblgood},
      cell{4}{3,4,5,6,8,9}={bg=tblgood},
      cell{5}{3,4,5,6,8,9}={bg=tblgood},
      cell{6}{3,4,5,6,9}={bg=tblgood},
      cell{7}{5,7,8,9}={bg=tblgood},
      %cell{8}{5,7,8,9}={bg=tblgood},
      cell{8}{3,4,5,7}={bg=tblbad},
      cell{8}{6,8,9}={bg=tblgood},
      cell{9}{8,9}={bg=tblgood},
    }
    \toprule
      & \SetCell[r=2]{l,m} Feature & \SetCell[r=1,c=4]{c} Fuzzy & & & & \SetCell[r=2]{c,m} \acs{DL2} & \SetCell[r=2]{c,m} \acs{STL} & \SetCell[r=2]{c,m} \acs{QLL} \\
    \cmidrule{3-6}
      & & Gödel & {\L}ukasiewicz & Product & Yager & & & \\
    \midrule
      \SetCell[r=4, c=1]{l} Logical & Distinct non-linear/linear connectives & \no & \yes & \yes & \yes & \no & \no & \yes \\
      & Negation & \yes & \yes & \yes & \yes & \yes\TblrNote{a} & \yes & \yes \\
      & Implication & \yes & \yes & \yes & \yes & \no & \yes & \yes \\
      & Good algebraic properties & \yes & \yes & \yes & \yes & \no & \no & \yes \\
    \midrule
      \SetCell[r=3, c=1]{l} Analytic\TblrNote{c}  & Shadow-lifting connectives\TblrNote{d} & \no & \no & \yes & \no & \yes & \yes & \yes \\
      & $\land$ and $\lor$ converge to min and max & \no\TblrNote{b} & \no\TblrNote{b} & \no\TblrNote{b} & \yes & \no\TblrNote{b} & \yes & \yes\\
      & Non-vanishing gradients & \no & \no & \no & \no & \no & \yes & \yes \\
      %& \emph{Naturality} & \no & \no & \no & \no & \no & \yes \\
    \bottomrule
    \end{talltblr}
\end{table}

% \paragraph{Problem: conciliating logic and machine learning needs}
% (0) fuzzy logics have a good proof system based on LL, but are empirically inadequate because they are not properly differentiable. most importantly their connectives are not shadow-lifting, and in general have ill-behaved gradients that do not guide the training as needed.

% \paragraph{Solution: \ac{QLL}}
% (1) ansatz good semantics via a naturality argument. observe how logits combine: conjunction (for independent variables) is addition while disjunction is given by log-sum-exp
% (2) the distinction independent/not, as well as the algebraic properties, are remindful of LL---indeed, fuzzy logics and LL are strictly related, as it is known by now \cite{metcalfeProofTheoryFuzzy2009,galatosResiduatedLatticesAlgebraic2007,cintulaHandbookMathematicalFuzzy2011a,}.
% thus LL is a good logic template to understand this semantics,
% however, LL and fuzzy logic still requires the additive connectives to coincide with \min and \max of numbers.
% to get around this problem, (3) notice there is in fact a family of connectives that interpolates between LSE and min, while preserving the algebraic properties (aside from idempotency) in particular their relation with addition.
% thus realize LSE is the additive connective of a \emph{quantitative LL}.

\paragraph{Problem: reconciling logic and machine learning needs.}
\Cref{tab:dl-feature-comparison} shows how the different aforementioned options stack against each other, as measured by both logical and analytic criteria.
Observe these two sets of criteria tend to exclude each other.

In particular, fuzzy logics are not designed to meet other analytic properties that are practically desirable for neuro-symbolic learning tasks, the most evident one being that connectives must be \emph{shadow-lifting}~\citep{varnaiRobustnessMetricsLearning2020}, that is, differentiable and entry-wise strictly increasing, so that their gradients can correctly inform the training dynamics.
This requirement turns out to be in direct tension with the algebraic picture above.
The gradient (where it exists) of the standard lattice operations $\min$ and $\max$, which interpret $\land$ and $\lor$, depends only on either one of the two operands, so that during training the other operand receives no signal at all.
\citet{varnaiRobustnessMetricsLearning2020} distilled this difficulty in a no-go theorem: \emph{no binary operation on the reals can be simultaneously idempotent, associative, and shadow-lifting}.

\paragraph{Solution: \ac{QLL}.}
We propose a resolution to this state of affairs in a logic we call \acfi{QLL}.
It combines many lessons learned from other differentiable logics, and organizes them according to a principle we call \emph{naturality}.
That means, its semantics uses operations which are already established in statistics, probability theory, or machine learning.

% The central observation is that the log-probability of the conjunction of two independent events is the sum of their log-probabilities.
% Thus, if log-probability is to be taken as a measure of belief or evidence, \textbf{addition gains a logical valence which differentiable logics must take into account}. Such `natural' operations are out-of-the-box compatible with the quantitative methods which constitute the rest of the training pipeline.

Naturality stems from the observation that even dimensionless numbers are implicitly \emph{typed} by the way they are used, and our choice of semantics must be type-correct in this sense.
% We will see in \cref{sec:evaluation} how obedience to this principle separates empirically effective methods from ineffective ones.
  %
% Intuitively, probabilities, likelihoods, perplexity, diversity indices are \emph{multiplicative} quantities, and are defined over $\MulReals = (\PosReals, \leq, 1, \Mtensor)$.
Intuitively, logits, information, entropy, energy, and distances are \emph{additive} quantities, as independent such quantities combine by sum.
Most commonly \emph{prediction loss functions} fall in this category (usually because they are either distances or entropies), thus our semantics must land in $\AddReals = (\Reals, \geq, 0, +)$ (note the reversed order).

We then proceed guided by \ac{LL}, a refinement of classical propositional logic introduced in the late 80s by~\citet{GIRARD19871,girardLinearLogicIts1995}\footnote{Though it draws from decades of tradition in \emph{relevance logic}, see \citep{avronSemanticsProofTheory1988}.}, in which connectives are distinguished between \emph{linear} (which aggregate two `independent' propositions) and \emph{non-linear} (which aggregate two propositions sharing the same `context')\footnote{Usually \emph{additive} and \emph{multiplicative}, a terminology we avoid for fear of confusion.}.
It is well-known that fuzzy logics can be seen as extension of \ac{LL}, see~\citep{cintulaHandbookMathematicalFuzzy2011,metcalfeProofTheoryFuzzy2009}.
Indeed, any logic whose semantics is numerical must necessarily treat truth values as \emph{resources} which cannot be freely duplicated or dropped: this is precisely what \ac{LL} does.
%
% \ac{LL} is now a standard logical framework in computer science, sporting an ever-growing range of applications from resource-sensitive type systems~\cite{bernardyLinearHaskellPractical2018} to communication protocols~\cite{cairesSessionTypesIntuitionistic2010} to differential privacy~\cite{gaboardiLinearDependentTypes2013}.
We can thus evince that \emph{addition} is the `linear conjunction' for additive quantities.
% These considerations are enough to single out the algebraic structure on $\Reals$ we want to target with the semantics, and we are left with the problem of making logical sense of these `natural operations'.
%
% We resolve this tension by reading off the right semantics from ML practice itself.
% This property is what we call \emph{naturality}.
% First, observe the log-probability of a conjunction of \emph{independent} events is the sum of their log-probabilities---so for independent quantities, addition behaves as a conjunction.
% Similarly, when one combines logits---the natural currency of neural classifiers---the corresponding disjunction is the following log-sum-exp function:
% \begin{equation}
%     a \pAor[1] b := -\log(e^{-a}+e^{-b}),
% \end{equation}
% Crucially, $+$ distributes over $\pAor$, a property that makes them stand in the same relation as the multiplicative and additive connectives of \ac{LL}.
% Together these single out an algebraic structure on $\AddReals = (\Reals,0,+)$, the home of loss functions and other additive quantities such as entropies, energies, and distances.%; multiplicative quantities (probabilities, likelihoods, perplexity) live dually in $\MulReals = (\PosReals,1,\Mtensor)$, related by Napierian duality.
Indeed, $(\Reals, \geq, 0, +, -, \max, \min)$ is a model of classical linear logic, but $\max$ and $\min$ are still the non-differentiable connectives we set out to avoid.

The final step is to \emph{soften} these operations, and consequently their algebraic skeleton.
We observe that there are in fact two one-parameter families of connectives---the \emph{`harmonic'} and plain \emph{log-sum-exp} (LSE) families $\pAor$ and $\pAand$, defined below---that converge to $\min/\max$ as $p\to\infty$ \citep{mitrinovicAnalyticInequalities1970} and are everywhere shadow-lifting for finite~$p$~\citep{nielsen_guaranteed_2016}.
\begin{equation}\label{eq:lses}
  a \pAor b = -\tfrac{1}{p}\log\!\bigl(e^{-pa}+e^{-pb}\bigr),
  \qquad
  a \pAand b = \tfrac{1}{p}\log\!\bigl(e^{pa}+e^{pb}\bigr)
\end{equation}
% This idea comes from \cite{varnaiRobustnessMetricsLearning2020}.
These connectives preserve every logically relevant algebraic property of $\min$ and $\max$ except idempotency; in particular, addition still distributes over them, which is exactly the property the \ac{LL} template requires.
Note, moreover, these are still natural as $\pAor[1]$ is the operation computing the logit of the disjunction of two events.
% Paying the price of \citeauthor{varnaiRobustnessMetricsLearning2020} impossibility by giving up the dispensable axiom (idempotency), we obtain shadow-lifting connectives with a quantitative convergence guarantee: at finite~$p$ training enjoys well-behaved gradients, while as $p\to\infty$ the surrogate converges to the original logical specification.
% % Annealing~$p$ during training thus traces a continuous path from a smoothed loss to the verbatim spec.
% We call the resulting system \emph{Quantitative Linear Logic} (\ac{QLL}): a differentiable logic in which the algebraic, the analytic, and the proof-theoretic requirements of a differentiable logic are satisfied simultaneously.

% The resulting logic, \ac{QLL}, is a `quantitative' variant of \ac{LL} in which additives are `softened' to operations like LSE.

\paragraph{Headline results.}
On every constraint we test (see~\cref{tab:constraints}), models trained with \ac{QLL} are formally verified to satisfy their specification at substantially higher rates than those trained with any competing logic.
The gap is starkest on classification robustness,
%(tested with perturbation $\varepsilon=0.2$),
where \ac{QLL} reaches \qty{46.0+-23.2}{\percent} verified satisfaction on MNIST and \qty{23.2+-9.6}{\percent} on Fashion-MNIST, while \ac{STL}, fuzzy logics, and the baseline are all stuck at \qty{0}{\percent}, and \ac{DL2} reaches at most \qty{10.4+-21.6}{\percent}.
On the clothing/footwear superclass anti-misclassification constraint on Fashion-MNIST, \ac{QLL} verifies \qty{97.4+-2.7}{\percent} of test instances against \qty{31.0+-19.0}{\percent} for \ac{DL2} and no more than \qty{28.4+-16.9}{\percent} for \ac{STL}. %; on the Dice \textsf{Not-Both} constraint, \qty{82.9+-7.2}{\percent} against \qty{16.2+-18.7}{\percent} for the best fuzzy logic and \qty{13.5+-16.2}{\percent} for \ac{DL2}.
%\Cref{tab:main_results-long}
\Cref{sec:evaluation} reports the full picture.

\paragraph{Main Contributions.}
\begin{enumerate}
    \item We define the syntax and semantics of \ac{QLL} (\cref{sec:qll}), a differentiable logic whose connectives are (1) by design compatible with the additive losses that already pervade ML practice (\emph{naturality}) and (2) preserve most of the familiar algebraic properties of classical and linear propositional logic (\emph{logical adequacy}).
    \item We establish the analytic properties of the \ac{QLL} connectives---smooth monotonicity, non-vanishing gradients, and a controlled ${p\to\infty}$ convergence to the verbatim logical specification---thereby threading the needle of \citeauthor{varnaiRobustnessMetricsLearning2020}'s no-go theorem~\citep{varnaiRobustnessMetricsLearning2020} and reconciling, for the first time, the algebraic, analytic, and proof-theoretic desiderata of a differentiable logic (\cref{sec:qll,tab:dl-feature-comparison}).
    \item We empirically evaluate \ac{QLL} against \ac{DL2}, \ac{STL}, and four fuzzy logics on MNIST, Fashion-MNIST, and a custom `Dice' benchmark, covering adversarial robustness, class-hierarchy constraints, and physical-world constraints (\cref{sec:evaluation,tab:main_results-long}). We assess each model along a hierarchy of increasingly rigorous metrics: %random-sample satisfaction (\acs{CAcc})\knote{no longer in the main text?},
    two versions of \ac{PGD}-based adversarial satisfaction (\acs{CSec}), and exhaustive formal verification (\acs{CSat}) with the Marabou verifier~\citep{katzMarabouFrameworkVerification2019,wuMarabouVersatileFormal2024} via Vehicle~\citep{daggittVehicleBridgingEmbedding2025}.
\end{enumerate}

%%
%% OLD INTRO IN BIN/NEW-INTRO-OLD.TEX
%%

% \paragraph{Outline.}
% \Cref{sec:background} reviews differentiable logics and property-driven training.%, framing the algebraic and analytic desiderata that motivate our design.
% \Cref{sec:qll} introduces \ac{QLL}, develops the additive reals as our semantic domain, establishes the analytic properties of the LSE-based connectives, and presents the syntax and semantics of the logic.
% \Cref{sec:evaluation} empirically compares \ac{QLL} against existing differentiable logics on benchmarkss based on the MNIST/Fashion-MNIST and Dice datasets, both at training time and under formal verification.
% \Cref{sec:discussion} discusses the broader implications of our results and outlines limitations and future work.

% \matteo{appendices}

\section{Background: Differentiable Logics and Property-Driven Training}
\label{sec:background}

Throughout the paper, we assume supervised machine learning scenarios that feature a data set $D$ with input-output pairs $(\hat{\vec{x}}, \hat{\vec{y}})$ with $\hat{\vec{x}} \in \real^m$ and $\hat{\vec{y}} \in \real^n$, sampled from a distribution $\mathcal{D}$.
Given a model $f_{\theta}: \real^m \to \real^n$, optimal parameters $\theta$ are obtained by gradient descent on an objective that combines a \emph{prediction loss} $\mathcal{L}_{\mathrm{pred}}: \real^n \times \real^n \to \real$ with a \emph{logical loss} $\dl{\phi}{}$ that penalises violations of a specification $\phi$ in an $\varepsilon$-neighbourhood of each data point:
\begin{equation}\label{eq:combined_optimisation}
    \argmin\limits_{\theta}\expected\limits_{(\hat{\vec{x}},\hat{\vec{y}})\sim\mathcal{D}}\ \biggl[\mathcal{L}_{\mathrm{pred}}(\hat{\vec{y}}, f_{\theta}(\hat{\vec{x}})) + \lambda \sup\limits_{\vec{x} \in B_\varepsilon(\hat{\vec{x}})} \dl{\phi}{}(\hat{\vec{y}}, f_{\theta}(\vec{x}))\biggr],
\end{equation}
where $B_\varepsilon(\hat{\vec{x}}) = \{\vec{x}\in\real^m \mid \lVert\hat{\vec{x}}-\vec{x}\rVert_\infty\le\varepsilon\}$ % = \{\vec{x}\in\real^m \mid \forall 1 \leq i \leq m, |\hat{x}_i - x_i|\le\varepsilon\}
is an $\varepsilon$-cube around the given data point $\hat{\vec{x}}$, and $\vec{x}$ thus stands in for an adversarial example.
The prediction loss $\mathcal{L}_{\mathrm{pred}}$ is usually cross-entropy for classification or mean-squared error for regression, while the logical loss $\dl{\phi}{}$ is supplied by the semantics function $\sem{-}$ of a \emph{differentiable logic}, which maps each well-formed $\phi$ to a differentiable function; \cref{tab:differentiable-logics-a,tab:differentiable-logics-b} survey the differentiable logics considered in this paper.
The contribution of each side to training is controlled by a parameter $\lambda \in [0,1]$---in practice we choose $\lambda \propto \frac{\lVert\nabla_{\theta}\mathcal{L}_{\mathrm{pred}} \rVert}{\lVert\nabla_{\theta}\sem{\phi} \rVert}$ so as to balance the norm of the gradients, thus ensuring that both losses contribute proportionately.

The second term of \eqref{eq:combined_optimisation} makes the optimisation problem \emph{adversarial}, and therefore optimisation is carried out via \acfi{PGD}~\citep{madryDeepLearningModels2018}: the inner supremum samples worst-case examples that violate the constraint $\phi$, and the outer minimisation then adapts $\theta$ to reduce the loss on those examples.

That second term can be logically interpreted as a universally quantified property ``for all $\vec{x} \in B_\varepsilon(\vec{\hat{x}}),\ \phi(f_\theta(\vec{x}, \hat{\vec{y}}))$''.

\paragraph{Related Work.}

Techniques for property-driven learning range from incorporating constraints into the training objective to enforcing them directly in the model architecture.
%from hard-coding some of the logical structure in the architecture of the model to a softer option of translating them into additional loss functions that penalise violations of the specification.
Early approaches include \ac{SBR}~\citep{diligentiSemanticbasedRegularizationLearning2017a}, Logic Tensor Networks~\citep{serafiniLogicTensorNetworks2016}, and probabilistic frameworks such as DeepProbLog~\citep{manhaeveDeepProbLogNeuralProbabilistic2018} and Semantic Loss~\citep{xuSemanticLossFunction2018}.
More recent efforts have focused on integrating logical constraints into a dedicated neural network layer, ensuring constraint satisfaction by design, e.g., CCN+~\citep{giunchigliaCCNNeurosymbolicFramework2024} and \ac{SPL}~\citep{ahmedSemanticProbabilisticLayers2022}, while approaches such as \ac{DRL}~\citep{stoianConvexityAssumptionRealistic2024} and \ac{PAL}~\citep{kurscheidtProbabilisticNeurosymbolicLayer2025} extend this idea to non-convex constraints. Our work addresses the methods that develop special purpose loss functions, and as such, our loss functions can complement any architectures, including the neuro-symbolic architectures mentioned above.

 \begin{wrapfigure}[18]{R}{0.5\linewidth}
     \centering
     \begin{tikzpicture}[
  every node/.style={font=\small}, %\sffamily
  arrow/.style={
    ->,
    >=Latex,
    shorten <=2pt,
    shorten >=2pt,
  },
  block/.style={text width=1.6cm, minimum height=0.8cm, align=center, draw, fill=white, drop shadow},
  invisibleBlock/.style={text width=1.8cm, align=center, font=\small\bfseries},
  arrowLabel/.style={font=\small\itshape, color=red},
]
  % Nodes
  \node [invisibleBlock] (neuralNetwork) {Neural\\Network $f$};
  \node [block, left=0.3cm of neuralNetwork] (training) {Training};
  \node [block, right=0.3cm of neuralNetwork] (verification) {Verification};
  \node [invisibleBlock, text width=4cm, above=1.7cm of neuralNetwork] (property) {Logical Specification $\phi$};

  % Arrows
  \draw[arrow, densely dashed] (training.north east) to [bend left] node[arrowLabel, midway, above]{verify} (verification.north west);
  \draw[arrow, densely dashed] (verification.south west) to [bend left] node[arrowLabel, midway, below]{retrain} (training.south east);

  \draw[arrow, shorten <=0pt] (property.south) to node[font=\small, midway, above, xshift=-0.4em]{$\sem{\phi}$} (training.north);
  \draw[arrow, shorten <=0pt] (property.south) to node[font=\small, midway, above, xshift=0.4em]{$\sem[\Bool]{\phi}$} (verification.north);
\end{tikzpicture}
     \caption{A diagrammatic depiction of the \emph{continuous verification} cycle~\citep{casadioNeuralNetworkRobustness2022,daggittVehicleBridgingEmbedding2025} commonly deployed in neuro-symbolic training and verification.}
     \label{fig:continuous-verification-cycle}
 \end{wrapfigure}
% %\end{figure}
On the side of formal verification of neural networks, a wide range of tools is available. %Approaches has been proposed, including
%in~\citep{huangSurveySafetyTrustworthiness2020,urbanReviewFormalMethods2021,liuAlgorithmsVerifyingDeep2021,albarghouthiIntroductionNeuralNetwork2021}.
They mainly differ by the algorithms they deploy: linear programming and \ac{SMT}-solving underlie Marabou~\citep{katzMarabouFrameworkVerification2019,wuMarabouVersatileFormal2024}, reachability-based methods underlie NNV~\citep{tranNNVNeuralNetwork2020,lopezNNV20Neural2023} and PyRAT~\citep{lemesleNeuralNetworkVerification2025,lemeslePyRATVerifyingNeural2026}, and linear bound propagation techniques gave rise to $\alpha,\beta$-CROWN~\citep{zhangEfficientNeuralNetwork2018a,xuAutomaticPerturbationAnalysis2020a,xuFastCompleteEnabling2021,wangBetaCROWNEfficientBound2021} (see also~\citep{huangSurveySafetyTrustworthiness2020,urbanReviewFormalMethods2021,liuAlgorithmsVerifyingDeep2021,albarghouthiIntroductionNeuralNetwork2021}). %\matteo{are these citations all necessary? this bit is very long}
%These methods typically assume pre-trained methods, for instance in benchmarks such as the annual \ac{VNNCOMP}~\citep{bakSecondInternationalVerification2021,mullerThirdInternationalVerification2022,brixFourthInternationalVerification2023,brixFirstThreeYears2023b,brixFifthInternationalVerification2024,kaulen6thInternationalVerification2025}.
In this work, we rely on Marabou, as it admits more general class of properties that some other solvers.
We use Marabou through Vehicle~\citep{daggittVehicleBridgingEmbedding2025}, a domain-specific language that provides a user-friendly interface for writing complex logical properties.%; and was built to enable the continuous verification cycle.

Given a logical specification for an ML model, one can either formally verify that it holds, or, failing that, retrain the model to satisfy it; and then make a verification attempt again.
This set up of using a single specification for both formal verification and training is commonly referred to as \emph{continuous verification}~\citep{daggittVehicleBridgingEmbedding2025,casadioNeuralNetworkRobustness2022} and is depicted in~\cref{fig:continuous-verification-cycle}.
The evaluation we present in~\cref{sec:evaluation} follows this pattern: training and verification are done with the same formula $\phi$.

\section{Quantitative Linear Logic}\label{sec:qll}

% In this section we describe the additive semantics of \ac{QLL}, a differentiable logic whose connectives arise from the natural algebraic structure of the real numbers~\citep{capucciQuantifiersQuantitativeReasoning2025}.\knote{As this maybe the first time you actually publish it, consider removing the arxiv citation, and just give it as original. You can restore the citation after acceptance}
% We first present the operations on the extended reals $\ExtReals$ and discuss their algebraic and analytical properties.
% We then define the propositional language used to write specifications, give its semantics, and explain how training loss relates to specification satisfaction.

% \subsection{The additive reals}\label{sec:add-reals}

We work in the extended reals $\ExtReals$, equipped with the \emph{reverse} order: $-\infty$ is maximally true and $+\infty$ is maximally false.
This convention matches the loss-function intuition that \textbf{lower numbers are more true}.
Indeed, losses are usually subject to minimization (think energy, entropy, distance).

As noted in the introduction, $\ExtReals$ is the natural domain of addition.
This will be our `primary' connective in this setting: all the other ones are obtained from this one following the template of Linear Logic\footnote{In a forthcoming work, we prove this is the semantics of a `quantitative' version of linear logic in which sequents are valued in real numbers.}~\citep{girardLinearLogicIts1995}.
The connectives so obtained are laid out in~\cref{table:add-reals}.

%  and are classified by \emph{polarity} (\emph{positive} or \emph{negative}) and \emph{linearity} (\emph{linear} or \emph{non-linear})\footnote{This terminology is non-standard, and these are usually called \emph{multiplicative} and \emph{additive}, respectively.}.
% The involution $a^* := -a$ exchanges polarity, so that one polarity determines the other by De Morgan duality, while linear connectives determine the non-linear ones by the requirement the former distribute over the latter.\footnote{Indeed, it can be shown there is essentially only one family of such operations \cite{aubrunMultiplicativePropertyCharacterizes2011}.}
% Therefore everything is determined by the choice of the \textbf{positive linear} connective.\knote{the last two sentences coudl be cut: will they know De Morgan, and you can just show the construction}

% We chose that to be ordinary addition $+$ extended to $\Reals$ by setting $-\infty + \infty = +\infty$ (so as to behave like \emph{conjunction} on $\{0,+\infty\}$).
% Then its De Morgan dual, the \textbf{negative linear} connective, is $a +^* b := -((-a) + (-b))$, which is the same as $+$ on all finite values except that $-\infty +^* \infty = -\infty$. (making it a \emph{disjunctive} operation).

% We will not use these operations directly here, but they are crucial since they determined \emph{implication} $a \mulimp b := b - a$, % by the residuation property $a + b \geq c \iff a \geq b \mulimp c$.
% which is key to encode inequalities (\cref{sec:qll-syntax}) but also to encode conditionals, see \cref{sec:evaluation}.

The \textbf{non-linear connectives} form two one-parameter families indexed by a \textbf{hardness} $p \in (0, \infty]$, as seen in \eqref{eq:lses}.
% For finite $p$, the positive non-linear connective is the \emph{log-sum-exp}~\citep{nielsen_guaranteed_2016} below right
% while its De Morgan dual operation is below left:
% \begin{equation}
% 	a \pAor b := -\tfrac{1}{p}\log(e^{-pa}+e^{-pb}),
% 	\qquad
% 	a \pAand b := \tfrac{1}{p}\log(e^{pa}+e^{pb}).
% \end{equation}
The first, $\pAor$ is \emph{disjunctive} while the second is \emph{conjunctive} $\pAand$.
The major feature of these operations is their smoothness, but it is equally crucial that they converge to the lattice connectives as $p \to +\infty$, thus for $p<\infty$ we can consider them as stand-ins for the lattice ones.

% \begin{proposition}\label{prop:approx}
% 	As $p \to \infty$, the non-linear connectives converge pointwise to the lattice operations:
% 	\(
% 		a \pAor b \conv[p \to \infty] \min(a, b),
%         \ %
%         a \pAand b \conv[p \to \infty] \max(a, b).
% 	\)
% \end{proposition}

% We thus define  $\pAor[\infty] = \min$ and $\pAand[\infty] = \max$ (recall the order is reversed).

\paragraph{Algebraic properties.}
We now substantiate the `good algebraic properties' attributed to \ac{QLL} in~\cref{tab:dl-feature-comparison}.
We start by noting both linear and non-linear operations (for all $p$) are strictly monotone, associative, and commutative, and have neutral (see~\cref{table:add-reals}) and absorbing elements.
Linear operations distribute over non-linear ones, but conjunctive operations do not necessarily distribute over disjunctive ones.
%\footnote{Though the inequality $a \pAand (b \pAor c) \geq (a \pAand b) \pAor (a \pAand c)$ holds, so that e.g. one can still convert a formula to disjunctive normal form (DNF).}
This distinguishes our operations from the differentiable robustness metric of \ac{STL} introduced in~\citep{varnaiRobustnessMetricsLearning2020}, whose connectives are not associative, which is a logically awkward property.

\begin{definition}
	The operations described in \Cref{table:add-reals} satisfy the properties of \cref{table:log-props}, plus unitality, associativity, and commutativity of all connectives (except $\mulimp$).
\end{definition}

We think of these as a `quantitative' version\footnote{Specifically, this is a poset enriched in $\ExtReals$ (cf.~\citep{lawvereMetricSpacesGeneralized1973}), plus extra algebraic structure making it a \emph{softale} in the sense of~\citep{capucciQuantitativeLinearLogic2026}.} of a \emph{residuated lattice} (see~\cref{subsec:residuated-lattices}); which is the algebraic structure for semantics of substructural logics (like fuzzy and linear).
The reader should compare it with the structure of Booleans $\Bool$, where also implication and order are given by the same operation $\Rightarrow$.

By the no-go theorem of~\citet{varnaiRobustnessMetricsLearning2020}, if we keep associativity and smooth monotonicity, we must drop some other property, and indeed for $p<\infty$ our connectives are not idempotent.
However, they are `idempotent up to a constant':
\begin{equation}
	\tfrac 1 p \log 2 \geq (a \pAor[p] a \mulimp a),
	\qquad
	\tfrac 1 p \log 2 \geq (a \mulimp a \pAand[p] a).
\end{equation}
We therefore conclude our operations are `logically adequate', since they respect the fundamental axioms of common algebraic semantics of substructural logic.

\begin{table}[tbp]
	\centering
	\footnotesize
	\caption{Algebraic structure of the additive reals $\AddReals$.
	Operations restricted to $\{-\infty,+\infty\}$ behave like either conjunction (blue background) or disjunction (red).
	% The order is reversed: $-\infty$ is maximally true and $+\infty$ is maximally false.
	Positive and negative polarities relate to each other via De Morgan duality.}
	\label{table:add-reals}
	\begin{tabularx}{\textwidth}{|c|c|*{3}{>{\centering\arraybackslash}X|}}
		\cline{2-5}
		\multicolumn{1}{c|}{} & \textbf{polarity} & \multicolumn{2}{c|}{\textbf{non-linear}} & \textbf{linear}\\
		\hline
		\multirow{2}{9ex}[-2ex]{\centering\parbox[c]{9ex}{\centering\textbf{negation}\\$a^* := -a$}} & \textbf{positive} &
		\multicolumn{2}{>{\columncolor{red!15}}c|}{\begin{tikzcd}[ampersand replacement=\&, column sep=small]
			\begin{array}{c}
				\begin{gathered}
					\bot := +\infty \\[-.5ex]
					a \pAor[\infty] b := \min(a, b)
				\end{gathered}
			\end{array}
			\&\&
			\begin{array}{c}
				\begin{gathered}
					\bot := +\infty \\[-.5ex]
					a \pAor b
				\end{gathered}
			\end{array}
			\arrow["{\ p \to \infty}"', from=1-3, to=1-1]
		\end{tikzcd}} & \cellcolor{blue!15}$\begin{gathered} \One := 0 \\[-.5ex] a + b \\[-.5ex] {\scriptstyle -\infty + \infty := +\infty}\end{gathered}$\\[2ex]
		\hhline{~|----|}
		& \textbf{negative} & \multicolumn{2}{>{\columncolor{blue!15}}c|}{\begin{tikzcd}[ampersand replacement=\&, column sep=small]
			\begin{array}{c}
				\begin{gathered}
					\top := -\infty \\[-.5ex]
					a \pAand[\infty] b := \max(a, b)
				\end{gathered}
			\end{array}
			\&\&
			\begin{array}{c}
				\begin{gathered}
					\top := -\infty \\[-.5ex]
					 a \pAand b
				\end{gathered}
			 \end{array}
			\arrow["{\ p \to \infty}"', from=1-3, to=1-1]
		\end{tikzcd}} & \cellcolor{red!15}$\begin{gathered} \One := 0 \\[-.5ex] a +^* b \\[-.5ex] {\scriptstyle -\infty +^* \infty := -\infty}\end{gathered}$\\[2ex]
		\hline
	\end{tabularx}

	\vspace*{2ex}
	\textbf{order}: $a \geq b$,
	\hspace*{6ex}
	\textbf{implication}: $a \mulimp b = a^* +^* b$, thus $b - a$.
	%, \qquad \textbf{scaling}: $k \in [0,\infty],\quad k \ast a = k \Mtensor a$
\end{table}

\begin{table}[tbp]
	\centering
	\caption{Logical properties satisfied by $\AddReals$ (\cref{table:add-reals}), viewed as a quantitative residuated lattice (a \emph{softale}) with a real-valued order ${\mulimp}:\ExtReals^2 \to \ExtReals$ coinciding with implication.}
	\label{table:log-props}
	\renewcommand{\arraystretch}{1.3}%
	\resizebox{\textwidth}{!}{%
	\normalfont\footnotesize
		\begin{tabular}{|c|c|c|}
			\hline
			\begin{tabular}{@{}c@{}}\textbf{identity}\\ $0 \geq (a \mulimp a)$\end{tabular}
			& \begin{tabular}{@{}c@{}}\textbf{modus ponens}\\ $(a \mulimp b) + (b \mulimp c) \geq (a \mulimp c)$\end{tabular}
			& \begin{tabular}{@{}c@{}}\textbf{contraposition}\\ $a \mulimp b = b^* \mulimp ^*$\end{tabular} \\
			\hline
			\begin{tabular}{@{}c@{}}\textbf{residuation} (currying)\\ $(a + b \mulimp c) = (a \mulimp (b \mulimp c))$\end{tabular}
			& \begin{tabular}{@{}c@{}}\textbf{mix}\\ $(a \mulimp b) + (c \mulimp d) \geq (a+c) \mulimp (b + d)$\end{tabular}
			& \begin{tabular}{@{}c@{}}\textbf{prelinearity}\\ $(a \mulimp b)^* \leq (b \mulimp a)$\end{tabular} \\
			\hline
			\begin{tabular}{@{}c@{}}\textbf{isomix}\\ $0^* = 0$\end{tabular}
			& \begin{tabular}{@{}c@{}}\textbf{involutivity}\\ $a^{**} = a$\end{tabular}
			& \begin{tabular}{@{}c@{}}\textbf{bottom}\\ $+\infty \mulimp a = +\infty$\end{tabular} \\
			\hline
			\multicolumn{3}{|c|}{\begin{tabular}{@{}c@{}}
				\textbf{disjunction rules}\footnote{One gets analogous ones for conjunction by dualization.} \\
				$(a \mulimp c) \pAand (b \mulimp c) \geq (a \pAor b \mulimp c)$
				\quad and \quad
				$(c \mulimp a) \pAor (c \mulimp b) \geq (c \mulimp a \pAor b)$
			\end{tabular}} \\
			\hline
		\end{tabular}%
	}
\end{table}

\paragraph{Analytic properties.}%\label{sec:analytic-props}
We now substantiate the `analytic properties' attributed to \ac{QLL} in~\cref{tab:dl-feature-comparison}.
Indeed, the non-linear connectives of QLL have remarkably well-behaved and interpretable partial derivatives:%\footnote{Note that these read very naturally as expressions in the multiplicative reals: indeed, reals with addition form a Lie group whose Lie algebra (tangent at $0$) are reals with multiplication.}:%, as we show in \cref{fig:gradient-fields}:
\begin{equation}\label{eq:derivatives-lse}
	\pdv{x \pAor y}{x} = \frac{e^{py}}{e^{px}+e^{py}},
	\hspace*{10ex}
	\pdv{x \pAand y}{x} = \frac{e^{px}}{e^{px}+e^{py}}.
\end{equation}
We plot the functions and their gradients in \cref{tab:conjunction-disjunction,tab:QLL_STL_conjunction,tab:QLL_STL_grad_vs_max}.

Several features conspire to make these gradients well-suited for optimization.
First, the domain $\ExtReals$ is unbounded: unlike connectives valued in $[0,1]$ or $[0,+\infty)$, no amount of satisfaction or violation drives operands into a saturated region where derivatives flatten out.
Second, the gradients align with logical intuition, i.e. conjunction gets gradients which focus the optimisation effort on the most violated term.
Third, the partial derivatives never vanish at finite operands---the contribution of every subformula remains visible to the optimizer regardless of how thoroughly it is dominated by its peers.
This is the \emph{shadow-lifting} property of \citet{varnaiRobustnessMetricsLearning2020}, and it is what justifies preferring $\pAor/\pAand$ to their non-smooth counterparts $\min/\max$:

\begin{proposition}[Shadow-lifting]\label{prop:shadow-lifting}
	For $p< \infty$, the partial derivatives of $\pAand$ and $\pAor$ are strictly positive for all finite values---no operand is ever `shadowed' by the other.
\end{proposition}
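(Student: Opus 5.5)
The plan is a direct computation of the partial derivatives, using the closed forms in \eqref{eq:lses}, followed by a positivity argument. Fix $p \in (0,\infty)$ and finite $x,y \in \real$. On $\real^2$ both $x \pAand y = \tfrac1p\log(e^{px}+e^{py})$ and $x \pAor y = -\tfrac1p\log(e^{-px}+e^{-py})$ are compositions of smooth maps. The arguments of the logarithms are sums of two strictly positive reals, so they lie in $(0,\infty)$, where $\log$ is smooth. Hence both connectives are $C^\infty$ on $\real^2$, and the partial derivatives exist everywhere at finite operands.

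Next I would differentiate with the chain rule. For the conjunctive connective,
\[
  \pdv{x \pAand y}{x} = \frac{1}{p}\cdot\frac{p\,e^{px}}{e^{px}+e^{py}} = \frac{e^{px}}{e^{px}+e^{py}}.
\]
For the disjunctive one, the two minus signs cancel and we get
\[
  \pdv{x \pAor y}{x} = \frac{e^{-px}}{e^{-px}+e^{-py}}.
\]
Multiplying numerator and denominator by $e^{p(x+y)}$ recovers the form $\frac{e^{py}}{e^{px}+e^{py}}$ stated in \eqref{eq:derivatives-lse}. By commutativity of both connectives, the partial derivatives in $y$ follow by swapping $x$ and $y$. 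So it suffices to treat the $x$-derivative.

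Positivity is then immediate. Each derivative is a quotient $u/(u+v)$ with $u = e^{\pm p x} > 0$ and $v = e^{\pm p y} > 0$, both finite because $x,y,p$ are finite. Hence it lies strictly inside $(0,1)$. This gives the claim, together with the quantitative refinement that the two partials of each connective are softmax weights summing to $1$: neither operand is ever shadowed, and neither operand ever receives all of the signal.

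I also want to extend the result to $n$-ary formulas $a_1 \pAand \cdots \pAand a_n$ built by iterating the binary connective, which associativity (from the Definition above) makes unambiguous. The direct route is to observe that the iterated operation equals $\tfrac1p\log\sum_i e^{pa_i}$, whose $i$-th partial is $e^{pa_i}/\sum_j e^{pa_j} > 0$. The dual argument handles $\pAor$. Alternatively one can use the chain rule, since a product of strictly positive factors is positive.

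The only delicate point is the boundary. At $\pm\infty$ the exponentials vanish or blow up, the derivative can degenerate to $0$ or $1$, and the operations are not differentiable in the usual sense. That is exactly why the statement restricts to finite values, and why $p<\infty$ is needed: at $p=\infty$ the connectives collapse to $\min/\max$ and the quotient degenerates. I expect the main obstacle to be stating this restriction carefully rather than any computation.
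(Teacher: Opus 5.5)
Your proposal is correct and follows the paper's approach. The paper justifies the proposition by the closed-form partial derivatives in \eqref{eq:derivatives-lse}, which are ratios of strictly positive exponentials, and that is exactly your computation, including the rescaling by $e^{p(x+y)}$ that puts the $\pAor$ derivative in the paper's form. Your $n$-ary softmax remark goes slightly beyond the paper's binary statement but is also correct.
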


Together, these properties give rise to what we call \emph{adversarially informative gradients}.
In adversarial min-max optimisation, gradients must decrease as constraints become satisfied, yet not vanish entirely so as to still expose nearby violations.
% The unbounded domain of \ac{QLL} connectives avoids saturation, while shadow-lifting ensures that no operand is ignored.
As a consequence, gradient signal persists even for satisfied constraints, enabling gradient-based methods such as \ac{PGD} to find violations while still guiding optimisation towards satisfaction.

\subsection{Syntax and semantics}\label{sec:qll-syntax}
\begin{comment}
Consider the following operation, called \textbf{Booleanization}:
\begin{equation}
	a \in \ExtReals, \qquad
	\booleize a = \begin{dcases}
		0 & a \leq 0\\
		+\infty & \text{otherwise}
	\end{dcases}
\end{equation}
This operation can be considered a left inverse to the inclusion $\Bool \to \ExtReals$ sending $\true$ to $0$ and $\false$ to $\infty$, and indeed we often denote such a map again by $\booleize(-)$.
% In the multiplicative reals, the same operation is defined as $\booleize a = 1$ when $a \geq 1$ and $0$ otherwise, which agrees with how indicator functions for sets are defined, in the sense that for a function $\phi:X \to \PosReals$, $\booleize \phi(x) = \phi(x)$ if and only if $\phi$ is the indicator function of some subset $A \subseteq X$, hence the name.

\begin{definition}[Boolean predicate]\label{def:bool-pred}
	We will say that a function $\phi:X \to \PosReals$ is \textbf{Boolean} if $\booleize \phi(x) = \phi(x)$, in which case it corresponds to some subset $A = \{x \in X \mid \phi(x) \leq 0\}$.
\end{definition}

We abuse notation and denote by $\booleize A$ the corresponding Boolean predicate in the above sense.

\begin{remark}
	The analogous of a Boolean function for the multiplicative reals is much more familiar: those are precisely the characteristic functions of subsets.
\end{remark}
\end{comment}

We now define the propositional fragment of \ac{QLL} used for constructing differentiable loss functions.
The syntax features propositional variables (or `atomic formulae') $y_i$ and $\hat{y}_i$ that stand for, respectively, the $i$-th output of the model and the $i$-th entry of the label of a data point.

\begin{definition}\label{def:qll-syntax}
	For our NeSy specifications, we consider the following subset of the \ac{QLL} language:
	\begin{equation}
		 \phi \grammareq y_i \grammarsep \hat{y}_i \grammarsep \Fnot{\phi} \grammarsep \phi \Fand \phi \grammarsep \phi \For \phi \grammarsep \phi \mulimp \phi
	\end{equation}
	where $1 \leq i \leq n$, and $r$ ranges over\footnote{Without loss of practicality one can restrict $r$ to rationals or a finite subset of machine-representable numbers.} $\ExtReals$.
\end{definition}

\begin{remark}
	This is a simplified version of the language in which atomic propositions of the form $r$ also stand in for the logical constants $\bot$ (which `means' $+\infty$), $\top$ (which `means' $-\infty$), and $\One$ (which `means' $0$).
	The full language can be found in \citep{capucciQuantifiersQuantitativeReasoning2025}.
\end{remark}

We will also often write $\bigwedge_{a \in A} \phi_a$, where $A$ is a finite set and $\{\phi_a\}_{a \in A}$ is a family of formulae, to abbreviate a long but finitary conjunction.

\begin{definition}[Additive semantics]\label{def:additive-semantics}
	The additive semantics $\sem[p]{\phi} \in \ExtReals^{\R^{2n}}$ is defined inductively on the structure of the formula, where $\pi_i$ denotes the $i$-th coordinate projection of $\R^{2n} \cong \R^{n + n}$:
	\begin{equation}\label{eq:qll-semantics}
		\begin{gathered}
			\sem[p]{\hat{y}_i} = \pi_{i},
			\qquad
			\sem[p]{y_i} = \pi_{n+i},
			\qquad
			\sem[p]{r} = r,
			\qquad
			\sem[p]{\Fnot{\phi}} = -\sem[p]{\phi},
			\\[.5ex]
			\sem[p]{\phi \Fand \psi} = \sem[p]{\phi} \pAand \sem[p]{\psi},
			\quad
			\sem[p]{\phi \For \psi} = \sem[p]{\phi} \pAor \sem[p]{\psi},
			\quad
			\sem[p]{\phi \mulimp \psi} = \sem[p]{\psi} - \sem[p]{\phi}
		\end{gathered}
	\end{equation}
\end{definition}

Since $\mulimp$ is both a residuation for the lattice $(\ExtReals, \geq)$ and the `quantitative' order structure, it encodes the order.
Therefore we can use it in formulae to account for $\leq/\geq$.
We think of this as a feature, but also note one could add an extra connective $\phi \leq \phi$ to be interpreted as necessary (e.g. ReLU, softplus, etc.).

% \begin{example}[Robustness]\label{ex:robustness}
% 	The logical part of \cref{eq:robust} can be translated to \ac{QLL} as
%     \begin{equation}
%         \rho := \bigwedge_{1 \leq i \leq n} ((\hat{y}_i \mulimp y_i) \mulimp \delta) \land ((y_i \mulimp \hat{y}_i) \mulimp \delta)
%     \end{equation}
%     We explain in which sense this is the correct translation later in \cref{ex:robustness-explained}
% \end{example}

Many examples of \ac{QLL} formalization can be found in \cref{subsec:qll-property-formalisations}.

In order to use \ac{QLL} to write specifications, it is necessary to be able to relate the semantics of the \ac{QLL} formula (which is quantitative) to its satisfaction in the Boolean sense.

\begin{definition}[Boolean semantics]\label{def:boolean-semantics}
	The boolean semantics $\sem[\Bool]{\phi} \in \Bool^{\R^{2n}}$ is defined inductively on the structure of the formula (inequalities are intended pointwise):
	\begin{equation}\label{eq:boolean-semantics}
		\begin{gathered}
			\sem[\Bool]{\hat{y}_i} = (\pi_i \leq 0),
			\qquad
			\sem[\Bool]{y_i} = (\pi_{n+i} \leq 0),
			\qquad
			\sem[\Bool]{r} = (r \leq 0),
			\qquad
			\sem[\Bool]{\Fnot{\phi}} = (\sem[\infty]{\phi} \geq 0),
			\\[.5ex]
			\sem[\Bool]{\phi \Fand \psi} = \sem[\Bool]{\phi} \land \sem[\Bool]{\psi},
			\quad
			\sem[\Bool]{\phi \For \psi} = \sem[\Bool]{\phi} \lor \sem[\Bool]{\psi},
			\quad
			\sem[\Bool]{\phi \mulimp \psi} = \left(\sem[\infty]{\psi} \leq \sem[\infty]{\phi}\right).
		\end{gathered}
	\end{equation}
\end{definition}

The following is then trivial to prove:

\begin{theorem}[Soundness]\label{thm:soundness}
	Let $\phi$ be a formula in the language defined in \cref{def:qll-syntax}.
	Then
	\begin{equation}\label{eq:soundness-claim}
		\sem[\Bool]{\phi} = \true
		\quad\Longleftrightarrow\quad
		\sem[\infty]{\phi} \leq 0.
	\end{equation}
\end{theorem}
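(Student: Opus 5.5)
The plan is a structural induction on $\phi$, following the grammar of \cref{def:qll-syntax}, with $p=\infty$ fixed throughout. For each clause I will compare the clause of \cref{def:boolean-semantics} with the corresponding clause of \cref{def:additive-semantics}, using the conventions of \cref{table:add-reals}: $\pAand[\infty]=\max$, $\pAor[\infty]=\min$, and implication is $a\mulimp b = a^* +^* b$.

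\emph{Base cases.} For $\hat y_i$, $y_i$ and constants $r$, both sides of \eqref{eq:soundness-claim} read literally $\pi_i\le 0$, $\pi_{n+i}\le 0$, or $r\le 0$, so there is nothing to prove.

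\emph{Negation.} This case needs no induction hypothesis. By definition $\sem[\Bool]{\Fnot\phi}$ is $(\sem[\infty]{\phi}\ge 0)$, and $\sem[\infty]{\Fnot\phi}=-\sem[\infty]{\phi}$. On $\ExtReals$, negation is an order-reversing involution with $0^*=0$ (isomix and involutivity in \cref{table:log-props}). Hence $-a\le 0$ holds exactly when $a\ge 0$, including the cases $a=\pm\infty$.

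\emph{Conjunction and disjunction.} These use the induction hypothesis on $\phi$ and $\psi$.
\begin{itemize}
\item For conjunction, $\max(a,b)\le 0$ holds if and only if $a\le 0$ and $b\le 0$.
\item For disjunction, $\min(a,b)\le 0$ holds if and only if $a\le 0$ or $b\le 0$.
\end{itemize}
Both facts are valid in the totally ordered set $\ExtReals$ with no finiteness assumption. Combining them with the hypotheses $\sem[\Bool]{\phi}\Leftrightarrow \sem[\infty]{\phi}\le 0$ and $\sem[\Bool]{\psi}\Leftrightarrow \sem[\infty]{\psi}\le 0$ gives the claim.

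\emph{Implication.} This is the one case where care is needed, and it again needs no induction hypothesis. Write $a=\sem[\infty]{\phi}$ and $b=\sem[\infty]{\psi}$. The Boolean side is $b\le a$, and the quantitative side is $(-a)+^*b\le 0$. For finite $a,b$ this is just $b-a\le 0$. The main obstacle is the infinite cases, where ``$b-a$'' must be read through $+^*$ with the convention $-\infty+^*\infty=-\infty$. I will check these cases by hand.
\begin{itemize}
\item If exactly one of $a,b$ is infinite, or they are infinities of opposite sign, then $(-a)+^*b$ is the obvious $\pm\infty$, and its sign agrees with the truth of $b\le a$.
\item If $a=b=+\infty$, then $(-a)+^*b=-\infty+^*\infty=-\infty\le 0$, and indeed $b\le a$ holds.
\item If $a=b=-\infty$, then $(-a)+^*b=+\infty+^*(-\infty)=-\infty$ by commutativity, and again $b\le a$ holds.
\end{itemize}
The choice of $+^*$ rather than $+$ in defining $\mulimp$ is exactly what makes $a\mulimp a\le 0$ (the identity law of \cref{table:log-props}). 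This closes the induction.
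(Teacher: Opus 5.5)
Your proposal is correct and follows the same structural induction as the paper's proof: atoms, negation and implication are handled directly from the definitions, and conjunction and disjunction use the induction hypothesis together with the elementary facts about $\max$ and $\min$. Your case-by-case check of infinite values in the implication clause, reading $b-a$ as $(-a)+^*b$ with $-\infty +^* \infty = -\infty$, is a worthwhile refinement. The paper simply writes $\sem[\infty]{\psi} - \sem[\infty]{\phi} \leq 0$ and leaves the cases $a=b=\pm\infty$ implicit, and there using $+$ instead of $+^*$ would in fact break the equivalence.
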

\begin{proof}
	By induction on the structure of $\phi$, see \Cref{app:soundness}.
\end{proof}

% Indeed, the above clauses are chosen specifically. for $\Fand,\For,\Fnot$ follow the standard translation of classical linear logic formulae into classical propositional ones.
% The clause for $\mulimp$ is instead quantitative: in $\AddReals$, implication encodes the order via residuation ($a \mulimp b \leq 0 \iff b \leq a$), and we transport this directly to the Boolean side rather than collapsing it to material implication.
% This is harmless on Boolean-valued atoms (where the two coincide) but strictly stronger on real-valued ones, and it is what makes \cref{ex:robustness-explained} go through.
% With this in place we obtain:

Compare to \ac{DL2}, for which \citep[Theorem~1]{fischerDL2TrainingQuerying2019} says $\phi$ is satisfied iff its semantics is \emph{exactly} $0$.

\section{Evaluation of QLL}\label{sec:evaluation}
In this section we evaluate \acs{QLL} against other differentiable logics based on their ability to jointly achieve high prediction performance and satisfy a given logical constraint on a classification task.
As a baseline, we consider standard supervised learning without any differentiable logics or adversarial training, where models are trained solely to optimise prediction accuracy.
Details on the experimental setup, including datasets, model architectures, and evaluation methodology, are provided in~\cref{subsec:additional-experimental}.

The experiments were run using a reusable, general purpose framework for property-driven training we implemented in PyTorch~\citep{paszkePyTorchImperativeStyle2019}.
The framework, including all code, data, and scripts required to reproduce the results presented in this paper, is available at \url{https://github.com/tflinkow/property-driven-ml}.

\subsection{Experimental Set Up: Exhaustive Verification vs Testing and Adversarial Attacks}

Given a logical constraint $\phi$, there are three ways in which adherence of a model to the constraint can be measured.
We list them in the order of increasing rigor.
The standard mathematical definitions of \acfi{CSec} and  \acfi{CSat}, following~\citep{casadioNeuralNetworkRobustness2022}, are given in~\cref{def:cacc,def:csec,def:csat} in~\cref{subsec:evaluation-metrics}.
\begin{itemize}[leftmargin=*]

    \item \acfi{CSec} evaluates whether a \ac{PGD} attack~\citep{madryDeepLearningModels2018} can discover counterexamples to $\phi$ in $B_\varepsilon(\hat{\vec{x}})$. This is the \emph{de facto} standard ML method of finding examples violating $\phi$, see~\citep{fischerDL2TrainingQuerying2019}.
    Effectiveness of this search varies based on the loss, and it comes with no formal guarantee of finding a counterexample if one exists\footnote{We make use of the AutoPGD~\citep{croceReliableEvaluationAdversarial2020} attack, since the effectiveness of standard \ac{PGD} has been shown to strongly rely on good attack parameters~\citep{mosbachLogitPairingMethods2019,croceScalingRandomizedGradientFree2020}}.
    \item \acfi{CSat} evaluates whether $\phi$ holds for all possible samples in $B_\varepsilon(\hat{\vec{x}})$.
    This measurement, also called \emph{formal verification}, is the most rigorous, and gives sound and complete guarantees of finding a counterexample if one exists.
    Formal verification is carried out with Marabou~\citep{katzMarabouFrameworkVerification2019,wuMarabouVersatileFormal2024} and Vehicle~\citep{daggittVehicleBridgingEmbedding2025}.
\end{itemize}

All metrics are evaluated on the test sets.
Standard prediction accuracy of models is denoted \acfi{PAcc}.
The \ac{CSat} metric, which exhaustively analyses the entire $B_\varepsilon(\hat{\vec{x}})$ for a data point $\hat{\vec{x}}$, is obtained from a fixed randomly chosen subset of the test set.
For all experiments, we report the mean and standard deviation (in \unit{\percent}) across $5$ random seeds for all metrics.

\emph{Given a property $\phi$, \ac{CSec}, seen as robustness to attack on $\phi$, is thus always an optimistic approximation of \ac{CSat} for $\phi$.
Our experiments in \Cref{tab:main_results-short,tab:main_results-long} show that typically, this over-approximation is vastly unrepresentative.
However, the gap between \ac{CSec} and \ac{CSat} is consistently lower when we train with \ac{QLL}.}
%
%The self oracle measures how effectively a logic can find violations using its own semantics.
%The reference oracle is not intended as an objective ground truth, but only as a shared empirical comparison across different logics.

%For all experiments, we report the mean and standard deviation (in \unit{\percent}) across $5$ random seeds for \acfi{PAcc}, \acfi{CAcc} (constraint satisfaction on random samples), and \acfi{CSec} (constraint satisfaction on adversarial samples).

%Importantly, adversarial attacks measure not directly whether a property is true, but only whether a violation can be found by a particular optimisation method.
%Consequently, a high \ac{CSec} score may be either because the network indeed satisfies the constraint, or trivially because the attack fails to find violations.

%To obtain exhaustive guarantees that the model satisfies the constraint, we additionally utilise the formal verifier Marabou~\citep{katzMarabouFrameworkVerification2019,wuMarabouVersatileFormal2024} via Vehicle~\citep{daggittVehicleBridgingEmbedding2025}, and report \acfi{CSat}, the fraction of instances for which the model is proven to satisfy the constraint.
%All constraints are expressed directly on logits due to solver limitations~\citep{wuMarabouVersatileFormal2024}.
% TODO:: irrelevant?

%{\color{red}
%We highlight in bold-face the best-performing logic per table, by which we mean the logic that achieves the highest product of mean \ac{PAcc} and mean \ac{CSat} at the $\epsilon$ used during training.
%}

\subsection{Performance across different kinds of NeSy tasks}

\begin{table}
    \footnotesize
    \centering
    \caption{Names and definitions of four properties $\phi$ that were used in the continuous verification experiments in this paper. 
    %used in training and verification with ``$\text{for all $\vec{x} \in B_\varepsilon(\hat{\vec{x}})$}, \phi(f_\theta(\vec{x}), \hat{\vec{y}})$''. 
    Their \ac{QLL} formalisations are presented in~\cref{subsec:qll-property-formalisations}, except for the Dice one which is simple enough to not require further discussion.
    Recall the semantics of $y_i$ is $f_\theta(\vec{x})_i$, for all labels $i$, and where $\vec{x} \in B_\varepsilon(\hat{\vec{x}})$.}
    \label{tab:constraints}
    \begin{tblr}
    {
      width=\linewidth,
      colspec={Q[l, m, mode=text, wd=2.3cm]Q[l, m, mode=math]X[l, m, mode=text]},
      row{1}={mode=text, font=\bfseries},
    }
      \toprule
        Name & Property & Description\\
      \midrule
        Strong classification robustness & y_{c} \geq\delta & the true-class logit must be above the fixed threshold $\delta$\\
        Classification robustness & \text{for all $i \in L$}, y_c \geq y_i & the true-class logit must be the largest\\
        Clothing/Footwear & \begin{cases}
            \bigwedge_{i\in F} y_{c} \geq y_i,&c\in C,\\
            \bigwedge_{i\in C} y_{c} \geq y_i,&c\in F
        \end{cases} & if the ground-truth class belongs to the group of clothing, its logit must exceed all footwear logits, and vice versa\\
        %
        %$\mathsf{NotBoth}$ & \bigwedge_{(i,j)\in P}(\lnot y_i \lor \lnot y_j) & for each pair of opposite faces, both faces must not be predicted at the same time \\
        %
        Exactly-One & \bigwedge_{(i,j)\in P}
		(\lnot y_i \lor \lnot y_j)\land(y_i \lor y_j) & for each pair of opposite faces, predict exactly one \\
      \bottomrule
    \end{tblr}%
\end{table}

\begin{table}
  \scriptsize
  \begin{talltblr}
  [
    theme=paper,
	caption={Selected experimental results for training and verifying models on MNIST, Fashion-MNIST, and Dice using different constraints and differentiable logics. Boldface indicates the best-performing configuration in each experiment, measured by product of PAcc and CSat. When a model is trained with a choice of $\varepsilon$, it is then attacked (\acs{CSec}) and verified (\acs{CSat}) for the same $\varepsilon$.},
	label={tab:main_results-short},
    note{a}={The \enquote{unknown} column denotes verification instances that could neither be proved or disproved within a \qty{30}{\second} timeout.},
    note{b}={We tested baseline CSec against QLL, see \cref{subsec:evaluation-metrics} for an explanation.}
  ]
  {
    width=\linewidth,
	colspec={X[1.7, l, m]X[l, m]l*{4}{S[table-format=3.1(2.1)]}},
    rowsep=1pt,
    row{1}={guard, font=\bfseries, mode=text},
	row{2}={guard, mode=text},
    row{6,14,17,18,26}={font=\bfseries},
  }
	\toprule
    \SetCell[r=2,c=1]{l} Constraint
    &
    \SetCell[r=2,c=1]{l} Dataset
    &
	  \SetCell[r=2,c=1]{l} Logic
	  &
	  \SetCell[r=2,c=1]{c} {\acs{PAcc} (\unit{\percent})}
	  &
	  \SetCell[r=2,c=1]{c} {\acs{CSec}$_{\epsilon}$ (\unit{\percent})}
	  &
	  \SetCell[r=1,c=2]{c} \acs{CSat}$_{\epsilon}$ (\unit{\percent})
	  &
	  \\
	\cmidrule[lr]{6-7}
	  & & & & & verified & unknown\TblrNote{a}\\
	\midrule
      \SetCell[r=4,c=1]{l} {Clothing/Footwear\\(training with $\epsilon=0.1$)} & \SetCell[r=4,c=1]{l} {Fashion-MNIST} & Baseline & 87.4+-0.5 & 35.8+-11.1\TblrNote{b} & 19.0+-0.0 & 1.8+-2.5 \\
      & & DL2 & 86.9+-0.8 & 99.8+-0.0 & 31.0+-19.0 & 36.6+-11.8 \\
    \cmidrule[lr]{3-7}
      & & ${}_5$\acs{STL} & 87.2+-0.5 & 99.8+-0.0 & 25.8+-11.9 & 35.0+-5.5 \\
      & & ${}_5$\acs{QLL} & 85.2+-1.7 & 99.0+-0.3 & 87.2+-12.3 & 12.0+-11.7 \\
    \midrule
      \SetCell[r=8,c=1]{l} {Exactly-One\\(training with $\epsilon=\nicefrac{4}{255}$)} & \SetCell[r=8,c=1]{l} Dice & Baseline & 83.8+-4.7 & 25.6+-7.2\TblrNote{b} & 6.2+-6.2 & 5.6+-3.0 \\
      & & DL2 & 83.7+-2.1 & 90.0+-4.5 & 3.2+-3.0 & 6.5+-6.7 \\
    \cmidrule[lr]{3-7}
      & & Gödel & 83.4+-1.6 & 85.8+-4.5 & 2.9+-4.2 & 10.3+-2.1 \\
      & & \L ukasiewicz & 84.4+-3.2 & 92.2+-5.2 & 5.9+-6.2 & 8.1+-3.1 \\
      & & Product & 84.2+-1.9 & 92.2+-3.1 & 3.4+-3.4 & 10.3+-2.9 \\
      & & ${}_2$Yager & 85.9+-2.1 & 91.2+-6.7 & 3.4+-3.1 & 6.4+-0.8 \\
    \cmidrule[lr]{3-7}
      & & ${}_5$\acs{STL} & 82.9+-3.4 & 41.8+-12.7 & 10.6+-9.1 & 4.4+-2.1 \\
      & & ${}_5$\acs{QLL} & 82.5+-4.4 & 51.5+-14.4 & 14.7+-10.3 & 5.9+-5.4 \\
     \midrule
      \SetCell[r=4,c=1]{l} {Strong classification robustness\\(training with $\epsilon=0.1$)} &\SetCell[r=4,c=1]{l} MNIST & Baseline & 97.6+-0.4 & 19.3+-18.1\TblrNote{b} & 2.0+-1.9 & 14.2+-18.1 \\
      & & DL2 & 97.6+-0.4 & 100.0+-0.0 & 24.2+-10.3 & 29.2+-19.8 \\
      & & ${}_5$\acs{STL} & 97.5+-0.3 & 100.0+-0.0 & 93.4+-9.4 & 6.6+-9.4 \\
      & & ${}_5$\acs{QLL} & 97.5+-0.3 & 100.0+-0.0 & 93.4+-9.4 & 6.6+-9.4 \\
    \midrule
	   \SetCell[r=8,c=1]{l} {Classification robustness\\(training with $\epsilon=0.2$)} & \SetCell[r=8,c=1]{l} MNIST & Baseline & 97.6+-0.4 & 0.0+-0.0\TblrNote{b} & 0.0+-0.0 & 0.0+-0.0 \\
	   & & DL2 & 97.1+-1.2 & 70.4+-34.0 & 10.4+-21.6 & 26.4+-32.8 \\
    \cmidrule[lr]{3-7}
      & & Gödel & 97.5+-0.4 & 94.7+-1.6 & 0.0+-0.0 & 0.4+-0.5 \\
      & & \L ukasiewicz & 97.6+-0.3 & 92.3+-5.1 & 0.0+-0.0 & 0.2+-0.4 \\
      & & Product & 97.6+-0.4 & 94.3+-3.0 & 0.0+-0.0 & 0.5+-0.6 \\
      & & ${}_2$Yager & 97.5+-0.6 & 93.3+-2.7 & 0.0+-0.0 & 0.2+-0.5 \\
    \cmidrule[lr]{3-7}
	   & & ${}_5$\acs{STL} & 97.7+-0.5 & 68.5+-34.9 & 0.0+-0.0 & 4.0+-6.2 \\
	   & & ${}_5$\acs{QLL} & 96.1+-0.8 & 67.7+-4.0 & 46.0+-23.2 & 30.0+-28.8 \\
	\bottomrule
  \end{talltblr}
\end{table}

Our evaluation covers three representative categories of neuro-symbolic learning tasks introduced in~\cref{sec:introduction}.
We evaluate \emph{adversarial robustness} on MNIST~\citep{lecunGradientbasedLearningApplied1998} and Fashion-MNIST using different robustness constraints from the literature.
We encode \emph{semantic class relations} on Fashion-MNIST~\citep{xiaoFashionMNISTNovelImage2017}, requiring a clear separation between clothing items and footwear.
For \emph{physical world-inspired constraints}, we use an original data set, and encode constraints arising from the geometry of the dice, such as not predicting two opposite faces at the same time.
Descriptions of these properties are presented in~\cref{tab:constraints}, with their \ac{QLL} formalisations described in~\cref{subsec:qll-property-formalisations}, and the datasets are described in~\cref{subsec:models-datasets}.
%\knote{Somewhere here or in the next subsection, we need to explain how comparisons and implications relate. I do not think we ever say it clearly...}

%\tnote{actually, maybe this is the place where to move all constraint descriptions instead of 4.3, 4.4., 4.5?}

\emph{Across all experiments (\Cref{tab:main_results-short,tab:main_results-long}), QLL consistently performs better than any other differentiable logic, irrespective of the chosen type of NeSy set up}.

%To enable a fair comparison despite fundamentally different loss scales and optimisation behaviour, we adopt a unified training and evaluation approach across all experiments, described below.

\subsection{Effect of the Interval Choice}\label{subsec:int}
Prior differentiable logic design made different choices for the interval on which loss functions are defined: fuzzy differentiable logics~\citep{vankriekenAnalyzingDifferentiableFuzzy2022} work on the interval $[0,1]$, DL2~\citep{fischerDL2TrainingQuerying2019}---on $[0,\infty)$, and STL~\citep{varnaiRobustnessMetricsLearning2020}---on $(-\infty, \infty)$. Recall that in QLL, the choice of the interval $\ExtReals$ is motivated by naturality, specifically that it is the range for cross-entropy, which is used as prediction loss.
Thus, the choice of the interval and the choice of operations are---via the naturality principle---the two sides of one coin. To test the effect of the interval choice alone, it suffices to compare the performance of STL/QLL vs DL2 on the strong classification robustness property that does not involve any conjunctions or disjunctions (all three logics interpret comparisons in the same way).

\emph{As \Cref{tab:main_results-short,tab:main_results-long} show, STL and QLL consistently out-perform DL2 on SCR.} We hypothesise that two factors are at play---both the range of the interval that matches better the standard cross-entropy loss, but also avoidance of $0$ gradients. Note that in DL2 and fuzzy logics, as soon as a constraint is satisfied, gradients vanish.
%Then, satisfaction of the constraint on one $\varepsilon$-cube stops effective training for all other $\varepsilon$-cubes.
Indeed, note that the inner maximisation step in~\eqref{eq:combined_optimisation} requires informative gradients to discover nearby violations even when the constraint is currently satisfied for one data point.
QLL and STL avoid this problem, by having the entire sub-interval $[-\infty,0]$ standing for satisfaction (recall \cref{thm:soundness}).

\subsection{Effect of Convergence  and Softness Parameters}
The next property of \ac{QLL} that we hold valuable is the fact that non-linear connectives in both logics converge to $\min$ and $\max$ as $p \rightarrow \infty$.
\ac{STL} shares the same feature, whereby its connectives are parameterised by a parameter $\nu$ with the same limit as $\nu \to \infty$.
Our hypothesis is that this property conveys better the effect of classical connectives at training time.

\emph{\Cref{tab:main_results-short,tab:main_results-long} show that having additive approximations of $\min$/$\max$ is useful: on three out of four properties, STL outperforms most other logics and QLL outperforms all other logics.}
We note, however, that current experiments are inconclusive when we try to isolate the effect of the interval choice from the effect of $\min$ and $\max$ approximations, where the former might overshadow the latter. This can be seen on the fuzzy logics, where Gödel directly uses $\min$ and $\max$, Yager smoothly approximates $\min$ and $\max$, and product uses neither. Still, all three fuzzy logics perform similarly.

\subsection{Gradients and Effect of Shadow-Lifting}

\begin{table}
    \small
    \centering
    \footnotesize
    \caption{Comparison of $\nabla\sem{x \land y}$ against $\nabla\max$ (defined as $(1/2,1/2)$ on the diagonal) for the \acs{STL} ($\nu=2$) and \acs{QLL} ($p=2$) conjunctions. Top row shows discrepancy in magnitude (lower is better), bottom row discrepancy in direction (closer to $1$ is better).}
    \label{tab:QLL_STL_grad_vs_max}
    \begin{tblr}
    {
      width=\textwidth,
      colspec={Q[c, m]X[c, m]X[c, m]},
      row{1}={mode=text, font=\bfseries},
    }
      \toprule
        & \acs{STL} ($\nu=2$) & \acs{QLL} ($p=2$) \\
      \midrule
        \adjustbox{valign=m}{\rotatebox{90}{$\lVert\nabla\sem{x \land y}-\nabla\max\rVert$}}
          & \includegraphics[width=.8\linewidth, valign=m]{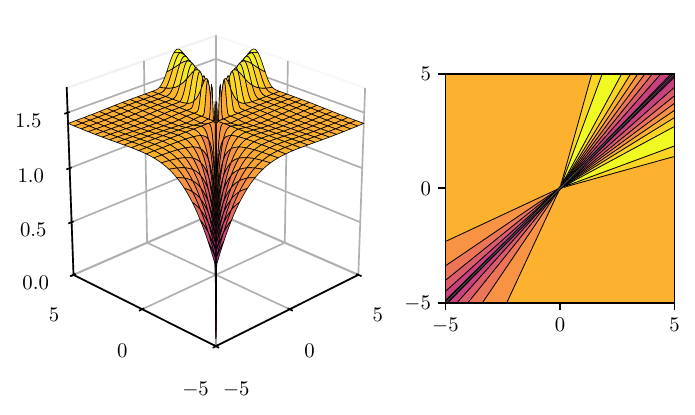}
          & \includegraphics[width=.8\linewidth, valign=m]{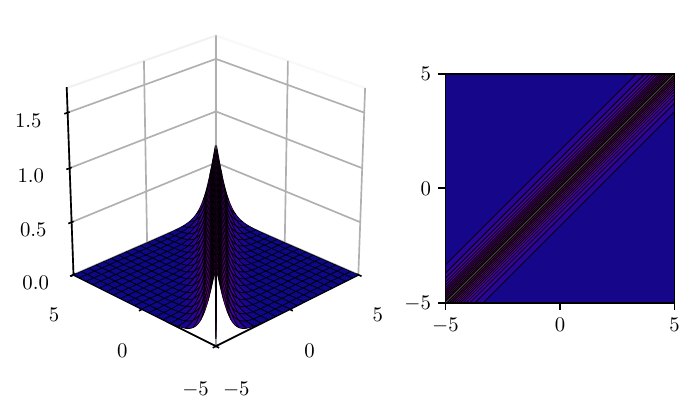}\\
        \adjustbox{valign=m}{\rotatebox{90}{$\cos(\nabla\sem{x \land y},\nabla\max)$}}
          & \includegraphics[width=.8\linewidth, valign=m]{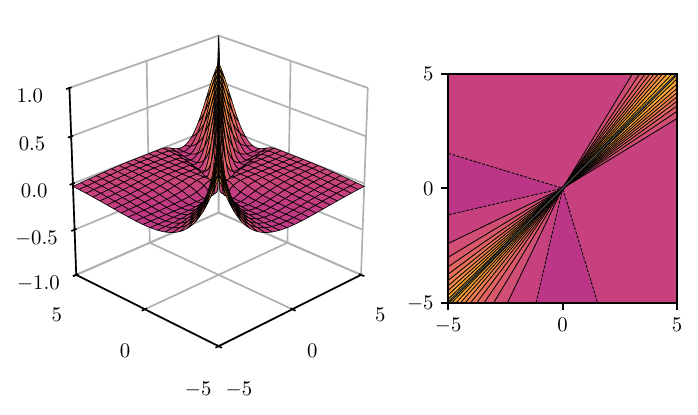}
          & \includegraphics[width=.8\linewidth, valign=m]{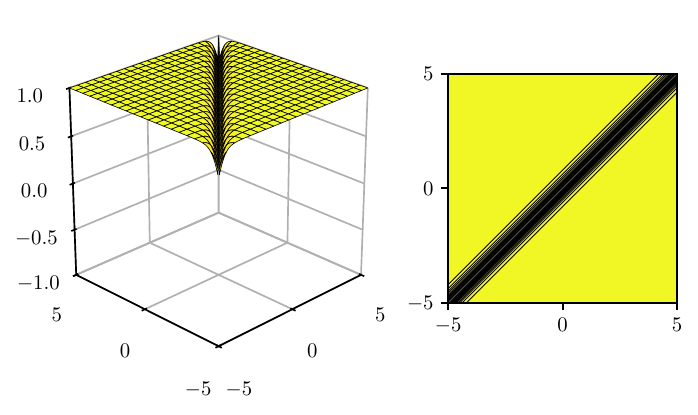}\\
      \bottomrule
    \end{tblr}%
\end{table}

Shadow-lifting has been promoted as one of the corner-stone analytic properties for differentiable logic in~\citep{varnaiRobustnessMetricsLearning2020}. Recall that DL2, STL and QLL are all shadow-lifting; and fuzzy logics are not, with the exception of the product $t$-norm (cf. \cref{tab:differentiable-logics-b}).
Fuzzy logic conjunction and disjunction exhibit poor gradient behaviour\footnote{Note that, for consistency with~\citep{vankriekenAnalyzingDifferentiableFuzzy2022}, we use linear fuzzy operators here, see~\cref{tab:differentiable-logics-b}.}.
In particular, the Gödel, \L ukasiewicz, and Yager $t$-norms (i.e. the semantics of conjunction) are not shadow-lifting; moreover, gradients of the latter two vanish on parts of their domain (cf.~\cref{tab:conjunction-disjunction}).
%
%The Yager t-norm also vanishes on a fraction of its inputs.
%While the product t-norm is shadow-lifting and only vanishes if both conjuncts are zero, its gradients are not sensibly aligned with conjunction semantics (i.e. fix most violated, 0*0).
%
In addition to that, the gradients of fuzzy implications $\phi\mulimp\psi$ as presented in~\cref{tab:differentiable-logics-a} vanish whenever $\sem{\phi}\le\sem{\psi}$.
%This makes them unsuitable as mappings for a comparison $x\le y$, since adversarial training requires informative gradients even when $x\le y$ in order to find nearby violations. % repeated above?

\emph{As \Cref{tab:main_results-long,tab:main_results-short} show, fuzzy differentiable logics consistently under-perform compared to the shadow-lifting differentiable logics, thus our experiments confirm the hypothesis of~\citep{varnaiRobustnessMetricsLearning2020}.}

However, the experiments may again also show the effects of the interval choice, as discussed in \cref{subsec:int}.
Moreover, \acs{STL} is shadow-lifting but still lags behind \acs{QLL}.
To explain this, it is interesting to compare the gradients of \acs{STL} and \acs{QLL} each with those of $\max$, as we do in \cref{tab:QLL_STL_grad_vs_max}.
We see the gradient of \ac{STL} deviate quite strongly from that of $\max$, and thus lack the good properties described in \cref{sec:qll} for \ac{QLL}.

%Together, these observations suggest that fuzzy logics are poorly suited for adversarial property-driven training.
%This is confirmed experimentally on the dice dataset with the $\mathsf{NotBoth}$ constraint, where the Gödel and \L ukasiewicz logics perform exactly like the baseline, while the Yager logic even performs slightly worse.
%Notably, the product fuzzy logic, the only fuzzy logic considered here that enjoys shadow-lifting, achieves the best performance among the fuzzy logics.

%More broadly, these results again highlight the difference between bounded and unbounded logic domains.
%Logics defined on bounded domains, i.e. \ac{DL2} and fuzzy logics, perform substantially worse than those on unbounded domains, i.e. \ac{QLL} and \ac{STL}.
%In this setting, however, the performance gap between \ac{QLL} and \ac{STL} is considerably smaller than in other experiments.

\section{Limitations and future work}\label{sec:discussion}

From a theoretical standpoint, we left open various issues.
The biggest limitation is being not able to adequately explain for the empirical behaviour of \ac{QLL}: intuitively, \ac{QLL} better manages the gap between the soft spec $\sem{\phi}$ and the hard one $\sem[\Bool]{\phi}$ because the soft connectives approximate the hard ones and due to logical adequacy.
However, we do not prove rigorous \emph{bounds} linking the test-time logical constraint loss value to the satisfaction of the hard spec.
Since verification is a much more expensive process compared to training, having guarantees on verification by evaluating the loss only would be of great practical value, evenmoreso for complex models.
% We couldn't exhibit an actual proof theory for \ac{QLL}---this is in fact subject of a paper in preparation.
% Moreover, while we limited to a propositional theory the need of a proper first-order theory emerges.
% We limited our language to only propositional variables $y_i$ and $\hat{y}_i$.
% This is fine and conceptually clear for classification tasks, in which the output of the network has a logical meaning ($f_\theta(\vec{x})$ is effectively a quantitative predicate over the set of labels), but it might not be true for other classes of tasks (e.g. generation), and does not apply to the inputs $\vec{x}$, which one might want to predicate on in the spec.

The ideas of~\citet{capucciQuantifiersQuantitativeReasoning2025} suggest this requires a first-order extension of \ac{QLL}.
DL2 is superior in this sense, having a first-order language, but not a proper first-order logic (with proof rules, model theory, etc.)
From an empirical standpoint, our experiments are purely demonstrative and cannot be considered to reflect real-world applications, based on model and dataset size as well as complexity of the logical constraint.

% It is enticing, nonetheless, to notice that first-order \ac{QLL} allows to interpret almost the \emph{entire} loss function \eqref{eq:combined_optimisation} logically, namely as
% \begin{equation}
%     \forall^1(\hat{x}, \hat{y}) \in \mathcal{D}.\ \mathcal{L}_{\mathrm{pred}}(\hat{\vec{y}}, f_\theta(\hat{\vec{x}})) \land \forall^\infty \vec{x} \in B_\varepsilon(\hat{\vec{x}}).\ \phi(\vec{x}, f_\theta(\vec{x}), \hat{\vec{x}}, f_\theta(\hat{\vec{x}})).
% \end{equation}
% Here the decorations on the quantifier stand for their `hardness', just like we have seen on the non-linear propositional connectives.
% Ignoring those, it is remarkable how the above first-order sentence reads natural: for all elements in the training set, we want to satisfy the loss and for all nearby points, the specification.

\begin{ack}
Flinkow and Monahan acknowledge the support of Taighde Éireann – Research Ireland grant number 20/FFP-P/8853 and the ADAPT Research Ireland Centre for AI Driven Digital Content Technology at Maynooth University under grant 13/RC/2106\_P2.
Komendantskaya acknowledges the partial support of the EPSRC grant AISEC: AI Secure and Explainable by Construction (EP/T026960/1),
and support of the ARIA: Mathematics for Safe AI grant.
Capucci and Komendantskaya acknowledge the support of the ARIA `Safeguarded AI' programme through project code MSAI-PR01-P05.

The authors would like to thank Gusts Gustavs Grīnbergs for his valuable contributions to and improvements of the codebase used in this work.

The authors have no competing interests to declare that are relevant to the content of this article.
\end{ack}

{
\small
\bibliographystyle{abbrvnat}
\bibliography{references}
}

\appendix

\crefalias{section}{appendix}
\crefalias{subsection}{appendix}

\section{Technical appendices and supplementary material}
\label{sec:appendix}

The appendix provides proofs, technical details, hyperparameters, and additional experiments that could not be included in the main paper.

\subsection{Residuated Lattices}\label{subsec:residuated-lattices}

\Cref{sec:qll} introduced the semantics of \ac{QLL} in $\ExtReals$ as exhibiting a `quantitative' version of a \emph{residuated lattice}.
This appendix unpacks what that means.
We reference~\citet{galatosResiduatedLatticesAlgebraic2007} and forthcoming work on the proof and model theory of \ac{QLL}~\citep{capucciQuantitativeLinearLogic2026}.

Residuated lattices are the standard algebraic semantics of substructural logics (like linear and fuzzy logic), and the laws collected in~\cref{sec:qll} are quantitative instances of laws that hold in any residuated lattice.

\paragraph{Lattices.}
A \emph{partially ordered set}, or \emph{poset}, is a pair $(L, \leq)$ where $\leq$ is reflexive ($a \leq a$), transitive ($a \leq b$ and $b \leq c$ implies $a \leq c$), and antisymmetric ($a \leq b$ implies $b \nleq a$).
Given two elements $a, b \in L$, their \emph{join} $a \vee b$ is their least upper bound and their \emph{meet} $a \wedge b$ is their greatest lower bound, when these exist.

The `least upper bound' property of the join $a \vee b$ is characterized by two clauses:
\begin{align}\label{eq:latt-upp}
  \tag{upper bound}
  c \leq a \ \text{or}\ c \leq b \quad &\text{then}\quad c \leq a \vee b,\\
  \tag{minimality}\label{eq:latt-min}
  a \leq c\ \text{and}\ b \leq c \quad &\text{then}\quad a \vee b \leq c.
\end{align}
The first says $a \vee b$ dominates each of $a$ and $b$; the second says it is the smallest such element.
The meet $a \wedge b$ is characterized dually as a greatest lower bound.

Then, a \emph{lattice} is a poset in which every pair of elements admits both a join and a meet.

Two examples of lattices relevant to this work are the extended reals $(\ExtReals, \geq)$, so that $a \vee b = \min(a, b)$ and $a \wedge b = \max(a, b)$; and the Booleans $\Bool = \{\false, \true\}$ with $\false \leq \true$, where $\vee$ is logical disjunction and $\wedge$ is logical conjunction.

The main idea behind the algebraic semantics of \ac{QLL} is to replace the order $\leq$ with a quantitative predicate instead, which is $\mulimp$ in $\ExtReals$.
This makes it a kind of `asymmetric' metric space, as observed already by~\citet{lawvereMetricSpacesGeneralized1973}.
Thus~\eqref{eq:latt-upp} and~\eqref{eq:latt-min} become the clauses appearing in~\cref{sec:qll} as the \emph{disjunction rules} of \ac{QLL}.
The same happens for the \emph{conjunction rules}, which are in fact formal duals.

\paragraph{Residuated lattices.}
A (commutative) \emph{residuated lattice} adds to a lattice a second pair of operations: a `multiplication' and an `implication'.
Concretely, it is a structure $(L, \vee, \wedge, \otimes, i, \to)$ where
\begin{enumerate}
  \item $(L, \vee, \wedge)$ is a lattice;
  \item $(L, \otimes, i)$ is a commutative monoid, i.e.\ $\otimes$ is associative, commutative, and has $i$ as a two-sided neutral element ($a \otimes i = i \otimes a = a$);
  \item (\textbf{mix}) $(a \leq b)$ and $(c \to d)$ imply $(a \otimes c) \leq (b \otimes d)$,
  \item $\to$ is the \emph{residual} of $\otimes$, meaning that the two operations are linked by the \emph{residuation property}
  \begin{equation}\label{eq:residuation}
    a \otimes b \leq c \quad\Longleftrightarrow\quad a \leq b \to c.
  \end{equation}
\end{enumerate}

Equation~\eqref{eq:residuation} pins down $\to$ uniquely given $\otimes$ and the order: $b \to c$ is the largest element $a$ such that $a \otimes b \leq c$.
%
% The residual lets one read the order off the algebra: $a \leq b$ iff $i \leq a \to b$.
% Hence a single number, the value of $a \to b$, encodes both the qualitative fact ``$a$ entails $b$'' (when it is at least $i$) and a quantitative measure of how much $a$ exceeds $b$ when this fails.
% This is exactly how \ac{QLL} uses $\to$ to encode inequalities between losses.
%
When $\otimes$ happens to coincide with the meet $\wedge$, the residual $\to$ is exactly the implication of \emph{Heyting algebras}, the algebraic semantics of intuitionistic propositional logic.
% \paragraph{Derived laws.}
% From the three axioms above one can derive the following laws, all of which appear, with the order reversed, in the table on page~\pageref{table:add-reals}:
% \begin{itemize}
%   \item \textbf{Identity:} $e \leq a \to a$.
%   \item \textbf{Modus ponens} (transitivity of implication): $(a \to b) \otimes (b \to c) \leq (a \to c)$.
%   \item \textbf{Currying:} $(a \otimes b) \to c = a \to (b \to c)$,
%   \item \textbf{Disjunction rules:} $(a \vee b) \to c = (a \to c) \wedge (b \to c)$,
%   \item \textbf{Conjunction rules:} $a \to (b \wedge c) = (a \to b) \wedge (a \to c)$,
%   \item \textbf{Distributivity of $\otimes$ over $\vee$:} $a \otimes (b \vee c) = (a \otimes b) \vee (a \otimes c)$.
%   The monoid distributes over joins (indeed, preservation of all suprema follows from~\eqref{eq:residuation}).
%   \item \textbf{Monotonicity:} both $\otimes$ and $\to$ are monotone in their arguments ($\to$ contravariantly in the first, covariantly in the second).
% \end{itemize}

$\ExtReals$ is a commutative residuated lattice with $\otimes = +$ as defined in \cref{table:add-reals}, and $\Bool$ is too with $\otimes = \land$---in fact, $\Bool$ is an Heyting algebra.

Again, in \ac{QLL} we treat the laws quantitatively.
In $\AddReals$, unitality, associativity, and commutativity of $\otimes = +$ still hold in the traditional sense, while our `mix' and `residuation' rules from~\cref{table:log-props} are a quantitative instantiation of the same rules we just introduced for residuated lattices.
In particular, note that for $\AddReals$, $\mulimp$ is both the quantitative order and the residuation $\to$ (compare the situation with $\Bool$, in which also order and residuation are given by classical material implication `$\Rightarrow$').

Often the lattice carries an order-reversing involution $a \mapsto a^*$ so that
\begin{equation}
	a \otimes b \leq c^* \iff a \leq (b \otimes c)^*
\end{equation}
This is the case for both our running examples, where $(-)^*$ is $-(-)$ for $\ExtReals$ and $\neg(-)$ for $\Bool$.
Then we can prove `contraposition':
\begin{equation}
  a \to b = b^* \to a^*.
\end{equation}

The reader interested in understanding how varieties of commutative residuated lattices form the semantics for both fuzzy logics can refer to~\citep{galatosResiduatedLatticesAlgebraic2007,metcalfeProofTheoryFuzzy2009} for fuzzy logic and to~\citep{aglianoAlgebraicInvestigationLinear2025} for linear logic.

\subsection{Proof of~\cref{thm:soundness}}\label{app:soundness}

We proceed by induction on the structure of $\phi$, recalling that at $p=\infty$ the non-linear connectives become $\sem[\infty]{\phi \Fand \psi} = \max(\sem[\infty]{\phi}, \sem[\infty]{\psi})$ and $\sem[\infty]{\phi \For \psi} = \min(\sem[\infty]{\phi}, \sem[\infty]{\psi})$.

\paragraph{Propositional variables.} For $\phi = y_i$ we have $\sem[\Bool]{y_i} = (\pi_i \leq 0) = (\sem[\infty]{y_i} \leq 0)$ by definition; the cases $\phi = \hat{y}_i$ and $\phi = r$ are identical.

\textbf{Negation.} By definition $\sem[\Bool]{\Fnot \phi} = (\sem[\infty]{\phi} \geq 0)$, which is equivalent to $-\sem[\infty]{\phi} \leq 0$, i.e.\ $\sem[\infty]{\Fnot \phi} \leq 0$.

\paragraph{Conjunction.} By induction hypothesis applied to $\phi$ and $\psi$,
\begin{equation*}
	\sem[\Bool]{\phi \Fand \psi}
	\;=\; \sem[\Bool]{\phi} \land \sem[\Bool]{\psi}
	\;\iff\; \sem[\infty]{\phi} \leq 0 \;\text{and}\; \sem[\infty]{\psi} \leq 0
	\;\iff\; \max(\sem[\infty]{\phi}, \sem[\infty]{\psi}) \leq 0,
\end{equation*}
and the right-hand side is $\sem[\infty]{\phi \Fand \psi} \leq 0$.

\paragraph{Disjunction.} Symmetrically, by induction hypothesis,
\begin{equation*}
	\sem[\Bool]{\phi \For \psi}
	\;=\; \sem[\Bool]{\phi} \lor \sem[\Bool]{\psi}
	\;\iff\; \sem[\infty]{\phi} \leq 0 \;\text{or}\; \sem[\infty]{\psi} \leq 0
	\;\iff\; \min(\sem[\infty]{\phi}, \sem[\infty]{\psi}) \leq 0,
\end{equation*}
and the right-hand side is $\sem[\infty]{\phi \For \psi} \leq 0$.

\paragraph{Implication.} By definition $\sem[\Bool]{\phi \mulimp \psi} = (\sem[\infty]{\psi} \leq \sem[\infty]{\phi})$, which is equivalent to $\sem[\infty]{\psi} - \sem[\infty]{\phi} \leq 0$, i.e.\ $\sem[\infty]{\phi \mulimp \psi} \leq 0$.
\qed.

\subsection{Differentiable Logics}\label{subsec:differentiable-logics-implemented}
\Cref{tab:differentiable-logics-a,tab:differentiable-logics-b} presents the differentiable logics implemented in our framework.
In addition to \acf{DL2}~\citep{fischerDL2TrainingQuerying2019} and the robustness metric for \acf{STL} proposed by~\citet{varnaiRobustnessMetricsLearning2020}, we also consider $t$-norm based fuzzy logics, where (a strong) conjunction is represented using a $t$-norm $T(x,y)$.

Note that fuzzy differentiable logics were introduced together with linear and non-linear logical connectives; and~\cref{tab:differentiable-logics-a,tab:differentiable-logics-b} reflect that.
However, differentiable logics originating from machine learning (\ac{DL2} and \ac{STL}) had only one set of logical connectives, which is reflected in the tables, as well.
For \ac{QLL}, we train using soft non-linear conjunction ($\land$) and disjunction ($\lor$), as well as implication.
Our hypothesis is that soft \ac{QLL} connectives are best for training as they converge to $\min$ and $\max$, which is the expected lattice interpretation of logical $\land$ and $\lor$ that ultimately gives us the Soundness result in~\cref{thm:soundness}.
For fuzzy logics, seeing that only their linear connectives are distinct, we train with linear conjunction ($\otimes$) and disjunction ($\oplus$), as well as implication.
To see how non-linear connectives of fuzzy logics will behave, it is sufficient to check the results for Gödel logic (for which linear and non-linear connectives coincide).

Note that \ac{STL} only has non-associative $n$-ary conjunction and disjunction, given in~\eqref{eq:stl-conjunction} and~\eqref{eq:stl-disjunction}, where $\nu>0$ is a constant, $\phi_{\min}=\min_i\{\dl{\phi_i}{STL}\}$, $\phi_{\max}=\max_i\{\dl{\phi_i}{STL}\}$, $\Tilde{\phi_i}^{\min}=\frac{\dl{\phi_i}{STL}-\phi_{\min}}{\phi_{\min}}$, and $\Tilde{\phi_i}^{\max}=\frac{\dl{\phi_i}{STL}-\phi_{\max}}{\phi_{\max}}$:
\begin{equation}\label{eq:stl-conjunction}
    \dl{\textstyle \bigwedge_i \phi_i}{STL} =
    -\begin{cases}
        \dfrac{\sum_i \phi_{\min} e^{\Tilde{\phi_i}^{\min}} e^{\nu \Tilde{\phi_i}^{\min}}}{\sum_i e^{\nu \Tilde{\phi_i}^{\min}}}, & \phi_{\min}<0,\\[1ex]
        \dfrac{\sum_i \dl{\phi_i}{STL} e^{-\nu \Tilde{\phi_i}^{\min}}}{\sum_i e^{-\nu \Tilde{\phi_i}^{\min}}}, & \phi_{\min}>0,\\[1ex]
        0, & \phi_{\min}=0.
    \end{cases}
\end{equation}
\begin{equation}\label{eq:stl-disjunction}
    \dl{\textstyle \bigvee_i \phi_i}{STL} =
    -\begin{cases}
        \dfrac{\sum_i \phi_{\max} e^{\Tilde{\phi_i}^{\max}} e^{\nu \Tilde{\phi_i}^{\max}}}{\sum_i e^{\nu \Tilde{\phi_i}^{\max}}}, & \phi_{\max}>0,\\[1ex]
        \dfrac{\sum_i \dl{\phi_i}{STL} e^{-\nu \Tilde{\phi_i}^{\max}}}{\sum_i e^{-\nu \Tilde{\phi_i}^{\max}}}, & \phi_{\max}<0,\\[1ex]
        0, & \phi_{\max}=0.
    \end{cases}
\end{equation}

Note also we use $\dl{\phi}{STL} = -\rho$ where $\rho$ is their robustness metric because \citep{varnaiRobustnessMetricsLearning2020} use the opposite order from us, maximising rather than minimising the resulting semantics.

\begin{landscape}
%\vspace*{\fill}
\footnotesize
\centering
\begin{talltblr}
[
    theme=paper,
    caption={An overview of differentiable logics considered in this paper. $\dl{-}{}$ denotes the mapping of a logical formula into loss. Connectives used in our experiments are highlighted in blue.},
    label={tab:differentiable-logics-a},
    note{a}={For $r=1$, the Yager t-norm corresponds to the \L ukasiewicz t-norm, and it approaches the Gödel t-norm as $r\to\infty$.},
    note{b}={For \acs{DL2}, we use it's atomic comparison operator $\dl{\phi\le\psi}{DL2}=\max(\dl{\phi}{DL2}-\dl{\psi}{DL2}, 0)$ for implication, and for \acs{STL}, we use $\dl{\phi\mulimp\psi}{STL}=\dl{\psi}{STL}-\dl{\phi}{STL}$.},
    note{c}={Atoms in \acs{DL2} are $\dl{t\le t'}{DL2}$ and $\dl{t\neq t'}{DL2}$. Instead of providing a dedicated negation operator, negation is pushed inwards to the level of comparison.},
]
{
    colspec={
        X[l,m,mode=text]
        *{3}{Q[c,m,mode=math]}
        *{2}{Q[c,m,mode=math]}
    },
    hlines, vlines,
    width=\linewidth,
    row{1}={mode=text,font=\bfseries},
    %rowsep=2pt,
    cell{2-8}{5}={bg=blue!20},
}
%\toprule
Logic
& Domain
& $\dl{\top}{}$
& $\dl{\bot}{}$
%& $\dl{t\le t'}{}$
& $\dl{\phi \mulimp \psi}{}$
& $\dl{\lnot \phi}{}$
\\ %\midrule

Gödel \citep{vankriekenAnalyzingDifferentiableFuzzy2022,affeldtFoundationDifferentiableLogics2026}
& \SetCell[r=4,c=1]{c} [0,1]
& \SetCell[r=4,c=1]{c} 1
& \SetCell[r=4,c=1]{c} 0
%& \SetCell[r=4,c=1]{c} \max\left(1-\max\left(\frac{\dl{t}{}-\dl{t'}{}}{\dl{t}{}+\dl{t'}{}}, 0\right), 0\right)
& \begin{cases}1,&\text{if }\dl{\phi}{G}\le\dl{\psi}{G},\\\dl{\psi}{G},&\text{otherwise}.
\end{cases}
& \begin{cases}1,&\text{if }\dl{\phi}{G}=0,\\0,&\text{otherwise}.\end{cases}
\\

\L ukasiewicz  \citep{vankriekenAnalyzingDifferentiableFuzzy2022,affeldtFoundationDifferentiableLogics2026}
& 
& 
& 
%& 
& \min(1-\dl{\phi}{\text{\L}}+\dl{\psi}{\text{\L}}, 1) & 1-\dl{\phi}{\text{\L}}
\\

Product \citep{vankriekenAnalyzingDifferentiableFuzzy2022,affeldtFoundationDifferentiableLogics2026}
& 
& 
& 
% & 
& \begin{cases}1,&\text{if }\dl{\phi}{P}\le\dl{\psi}{P},\\\frac{\dl{\psi}{P}}{\dl{\phi}{P}},&\text{otherwise}\end{cases} & \begin{cases}1,&\text{if }\dl{\phi}{G}=0,\\0,&\text{otherwise}.\end{cases}
\\

${}_r$Yager{\TblrNote{a}}\hspace{.5em} \citep{vankriekenAnalyzingDifferentiableFuzzy2022,affeldtFoundationDifferentiableLogics2026}
& 
& 
& 
%& 
& \begin{cases}1,\text{if } \dl{\phi}{Y}\le\dl{\psi}{Y},\\1-((1-\dl{\psi}{Y})^r -(1-\dl{\phi}{Y})^r)^{\nicefrac{1}{r}},&\text{otherwise.}\end{cases} & (1-(1-(1-\dl{\phi}{Y}))^r)^{\nicefrac{1}{r}}
\\

%\midrule

\acs{DL2} \citep{fischerDL2TrainingQuerying2019}
& [0,\infty]
& 0
& \infty
% & \max(0,t-t')
& \text{not provided}\TblrNote{b}
& \text{not provided}\TblrNote{c}
\\

%\midrule

${}_{\nu}$\acs{STL} \citep{varnaiRobustnessMetricsLearning2020}
& [-\infty,\infty]
& \infty
& -\infty
% & t'-t
& \text{not provided}\TblrNote{b}
& -\dl{\phi}{STL}
\\

%\midrule

${}_p$\acs{QLL} (this paper)
& [-\infty,\infty]
& -\infty
& \infty
% & t-t'
& \dl{\psi}{QLL} - \dl{\phi}{QLL}
& -\dl{\phi}{QLL}
\\

%\bottomrule
\end{talltblr}
\par\vspace{\fill}\par
\footnotesize
\centering
\begin{talltblr}
[
    theme=paper,
    caption={An overview of differentiable logics considered in this paper. $\dl{\land}{}$ and $\dl{\lor}{}$ denote the non-linear connectives, and $\dl{\otimes}{}$ and $\dl{\oplus}{}$ denote the linear ones. Connectives used in our experiments are highlighted in blue. Their plots and gradients are shown in~\cref{tab:conjunction-disjunction}.},
    label={tab:differentiable-logics-b},
    note{a}={\acs{DL2} and \acs{STL} do not admit a meaningful distinction between non-linear and linear connectives.},
]
{
    colspec={
        X[l,m,mode=text]
        *{4}{Q[c,m,mode=math]}
    },
    hlines, vlines,
    width=\linewidth,
    row{1}={mode=text,font=\bfseries},
    %rowsep=2pt,
    cell{2-5}{3,5}={bg=blue!20},
    cell{6}{2-5}={bg=blue!20},
    cell{6,7}{2-5}={bg=blue!20},
    cell{8}{2,4}={bg=blue!20},
}
%\toprule
Logic
& $\dl{\phi\land \psi}{}$
& $\dl{\phi \otimes \psi}{}$
& $\dl{\phi\lor \psi}{}$
& $\dl{\phi \oplus \psi}{}$
\\ %\midrule

\text{Gödel}
\citep{vankriekenAnalyzingDifferentiableFuzzy2022,affeldtFoundationDifferentiableLogics2026}
& \SetCell[r=4,c=1]{c} \min(\dl{\phi}{},\dl{\psi}{})
& \min(\dl{\phi}{G}, \dl{\psi}{G})
& \SetCell[r=4,c=1]{c} \max(\dl{\phi}{},\dl{\psi}{})
& \max(\dl{\phi}{G}, \dl{\psi}{G})
\\

\text{\L ukasiewicz}
\citep{vankriekenAnalyzingDifferentiableFuzzy2022,affeldtFoundationDifferentiableLogics2026}
&
& \max(\dl{\phi}{\text{\L}}+\dl{\psi}{\text{\L}}-1,0)
&
& \min(\dl{\phi}{\text{\L}}+\dl{\psi}{\text{\L}},1)
\\

Product \citep{vankriekenAnalyzingDifferentiableFuzzy2022,affeldtFoundationDifferentiableLogics2026}
&
& \dl{\phi}{P}\dl{\psi}{P}
&
& \dl{\phi}{P}+\dl{\psi}{P}-\dl{\phi}{P}\dl{\psi}{P}
\\

${}_r$Yager \citep{vankriekenAnalyzingDifferentiableFuzzy2022,affeldtFoundationDifferentiableLogics2026}
&
& \max(1-((1-\dl{\phi}{Y})^r+(1-\dl{\psi}{Y})^r)^{\nicefrac{1}{r}},0)
&
& \min(\dl{\phi}{Y}^r+\dl{\psi}{Y}^r, 1)
\\

%\midrule

\acs{DL2} \citep{fischerDL2TrainingQuerying2019}
& \SetCell[r=1,c=2]{c} \dl{\phi}{DL2}+\dl{\phi}{DL2}\TblrNote{a}
& 
& \SetCell[r=1,c=2]{c} \dl{\phi}{DL2}\dl{\phi}{DL2}\TblrNote{a}
&
\\

%\midrule

${}_{\nu}$\acs{STL} \citep{varnaiRobustnessMetricsLearning2020}
& \SetCell[r=1,c=2]{c} \text{see~\eqref{eq:stl-conjunction}}\TblrNote{a}
& 
& \SetCell[r=1,c=2]{c} \text{see~\eqref{eq:stl-disjunction}}\TblrNote{a}
&
\\

%\midrule

${}_p$\acs{QLL} (this paper)
& \dl{\phi}{QLL} \pAand \dl{\psi}{QLL}
& \dl{\phi}{QLL} + \dl{\psi}{QLL}
& \dl{\phi}{QLL} \pAor \dl{\psi}{QLL}
& \dl{\phi}{QLL} +^* \dl{\psi}{QLL}
\\

%\bottomrule
\end{talltblr}
%\vspace*{\fill}
\end{landscape}

In~\cref{tab:conjunction-disjunction} we plot all the conjunction and disjunction operators surveyed here with their gradients.
Fuzzy logics are plotted with $0 \leq x,y \leq 1$ and the result is also plotted on a $0$ to $1$ scale.
The rest of the logics are plotted on $-2 \leq x,y \leq +2$ to better display the behaviour of the operations.
In~\cref{tab:QLL_STL_conjunction} we also separately plot \acs{STL} and \acs{QLL} conjunctions for varying parameters $\nu$ and $p$, respectively, along with their gradients.

\begin{table}
    \small
    \centering
    \caption{Surface and contour plots with gradients of conjunction and disjunction operators listed in~\cref{tab:differentiable-logics-b}.}
    \label{tab:conjunction-disjunction}
    \begin{tblr}
    {
      width=1.05\textwidth,
      colspec={Q[l, valign=m, mode=text, cmd=\rotatebox{90},wd=.5cm]X[2, c, valign=m, mode=text]X[c, valign=m, mode=text]X[2, c, valign=m, mode=text]X[c, valign=m, mode=text]},
      row{1}={mode=text, font=\bfseries},
      cell{1}{1}={cmd=, valign=m}
    }
      \toprule
        Logic & \SetCell[r=1, c=2]{c} Conjunction & & \SetCell[r=1, c=2]{c} Disjunction & \\
      \cmidrule[lr]{2-3} \cmidrule[lr]{4-5}
        & Surface & Gradient & Surface & Gradient \\
      \midrule
        Gödel
        & \includegraphics[width=4.5cm, valign=m]{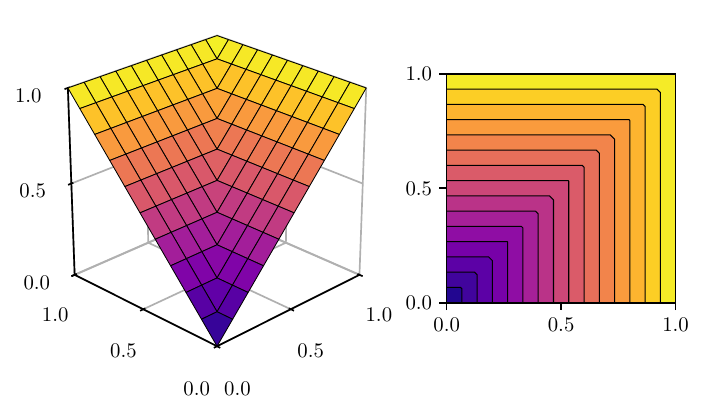} 
        & \includegraphics[width=2cm, valign=m]{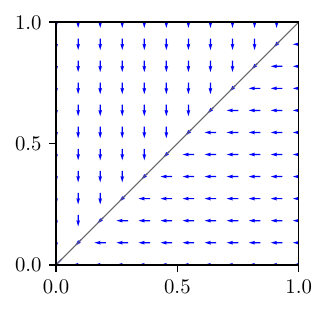}
        & \includegraphics[width=4.5cm, valign=m]{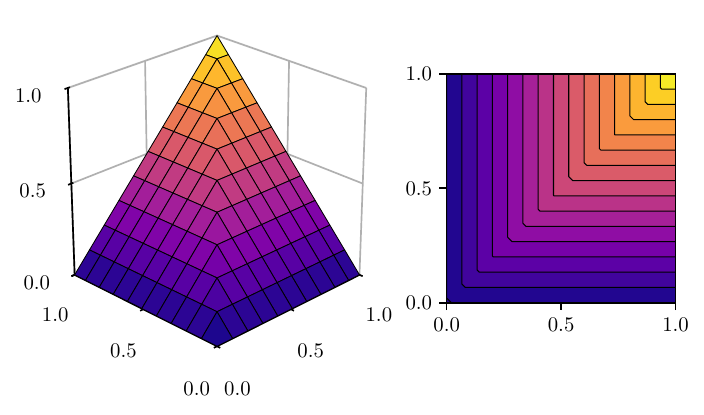} 
        & \includegraphics[width=2cm, valign=m]{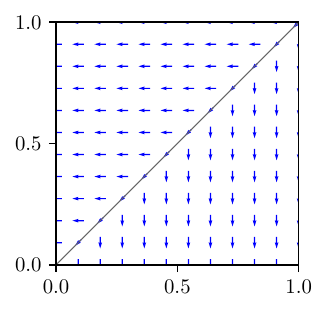}
        \\
        \L ukasiewicz
        & \includegraphics[width=4.5cm, valign=m]{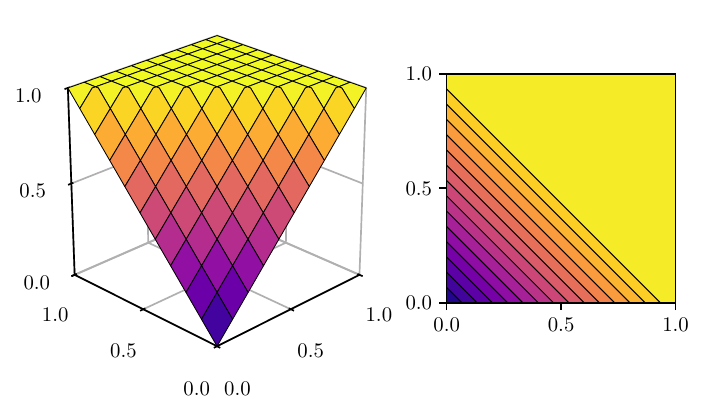} 
        & \includegraphics[width=2cm, valign=m]{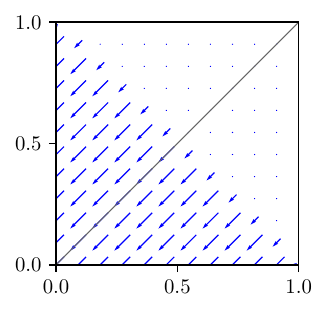}
        & \includegraphics[width=4.5cm, valign=m]{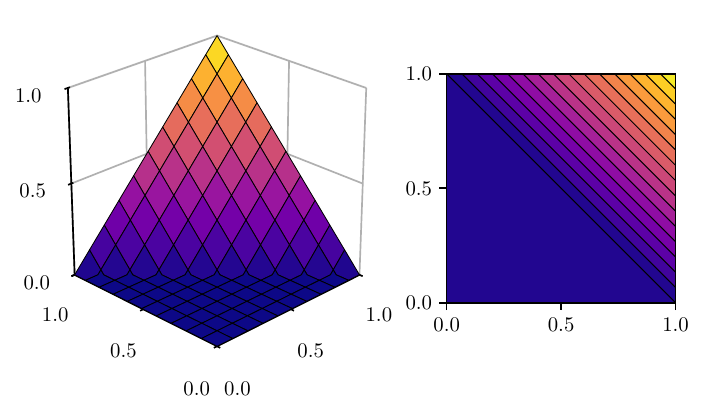} 
        & \includegraphics[width=2cm, valign=m]{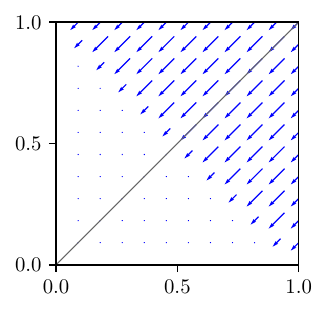}
        \\
        Product
        & \includegraphics[width=4.5cm, valign=m]{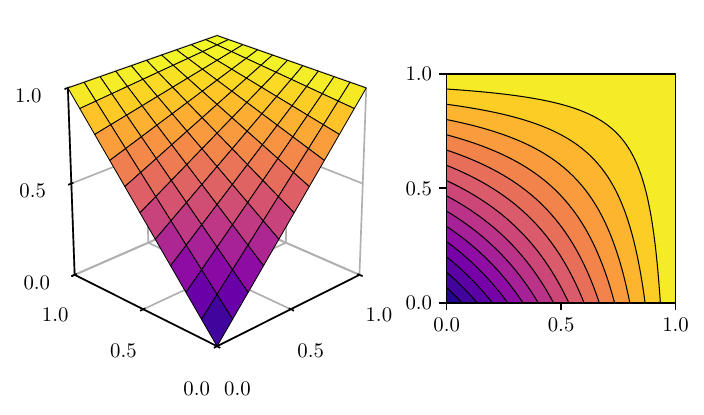} 
        & \includegraphics[width=2cm, valign=m]{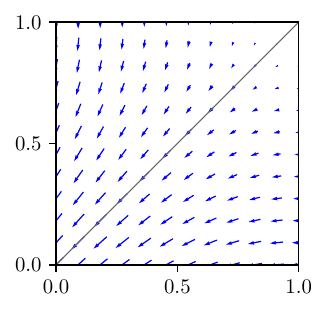}
        & \includegraphics[width=4.5cm, valign=m]{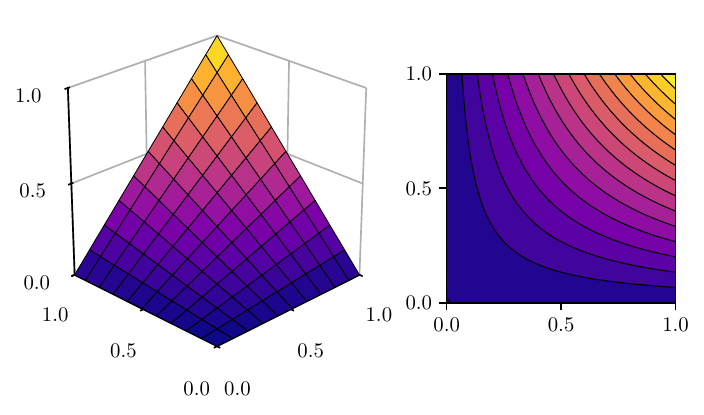} 
        & \includegraphics[width=2cm, valign=m]{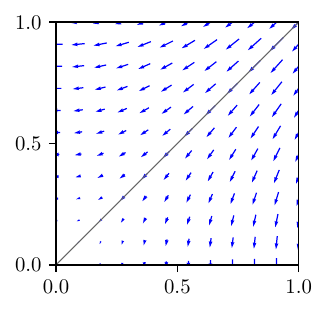}
        \\
        ${}_5$Yager
        & \includegraphics[width=4.5cm, valign=m]{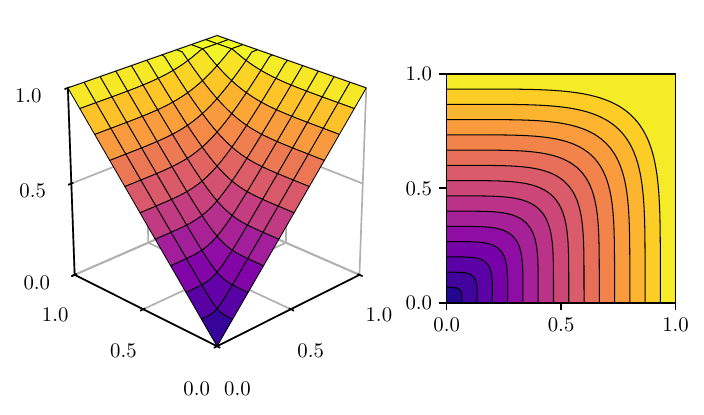} 
        & \includegraphics[width=2cm, valign=m]{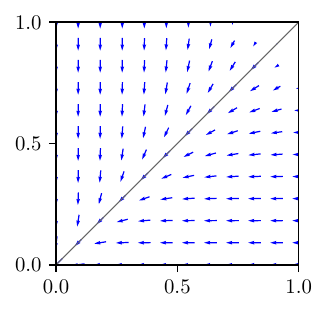}
        & \includegraphics[width=4.5cm, valign=m]{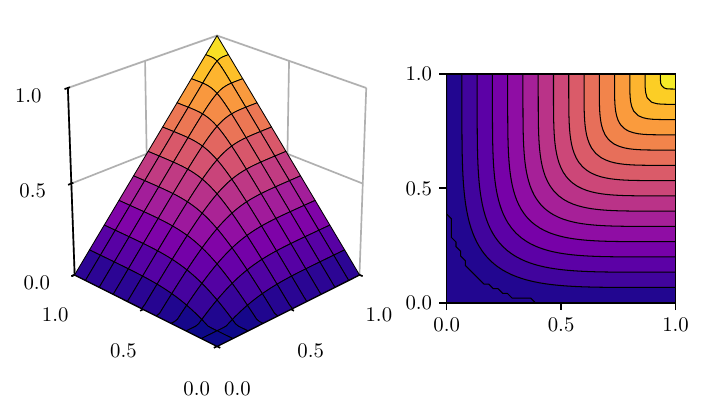} 
        & \includegraphics[width=2cm, valign=m]{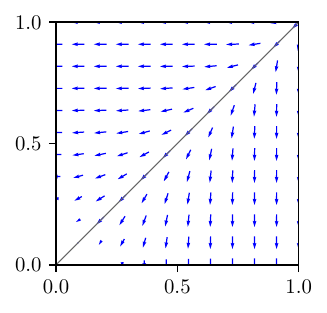}
        \\
      \midrule
        \acs{DL2}
        & \includegraphics[width=4.5cm, valign=m]{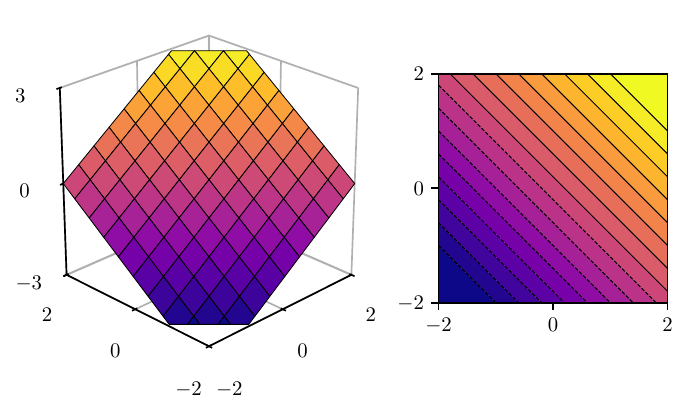} 
        & \includegraphics[width=2cm, valign=m]{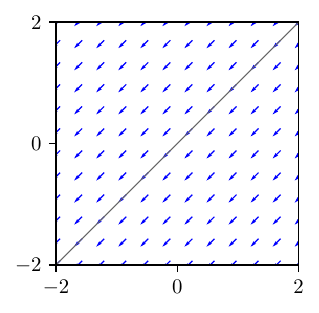}
        & \includegraphics[width=4.5cm, valign=m]{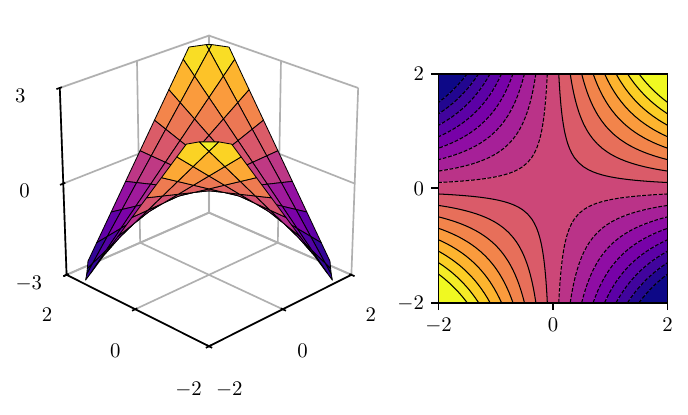} 
        & \includegraphics[width=2cm, valign=m]{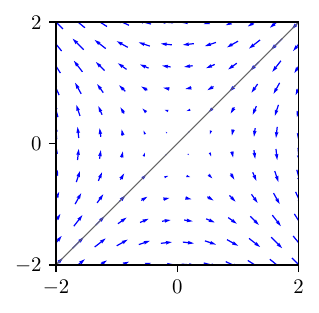}
        \\
      \midrule
        ${}_1$\acs{STL}
        & \includegraphics[width=4.5cm, valign=m]{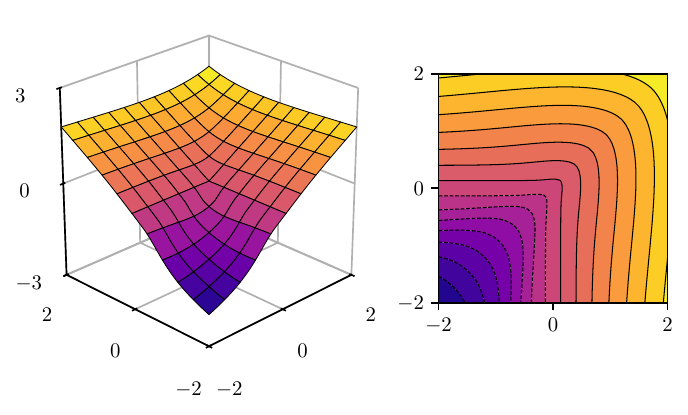} 
        & \includegraphics[width=2cm, valign=m]{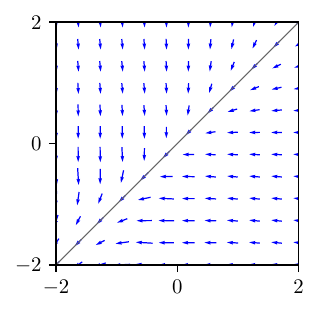}
        & \includegraphics[width=4.5cm, valign=m]{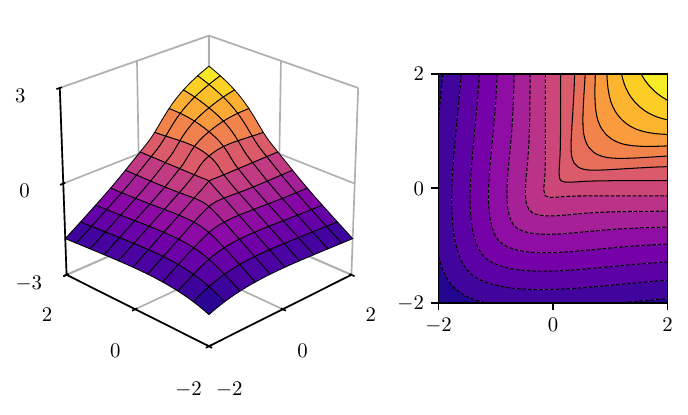} 
        & \includegraphics[width=2cm, valign=m]{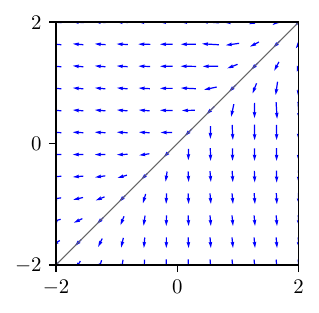}
        \\
      \midrule
        ${}_1$\acs{QLL}
        & \includegraphics[width=4.5cm, valign=m]{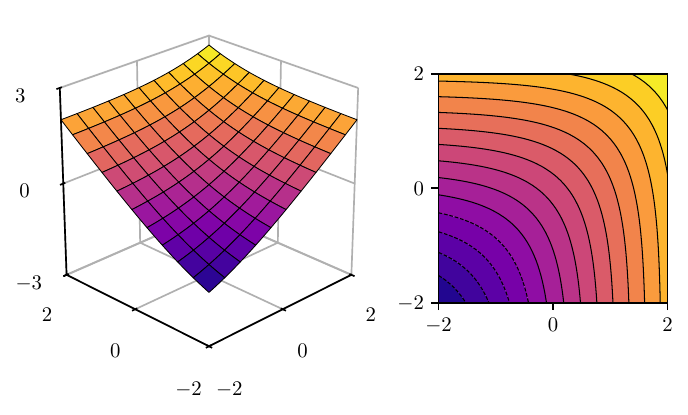} 
        & \includegraphics[width=2cm, valign=m]{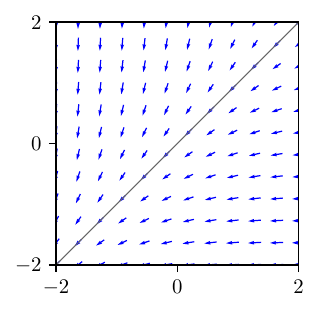}
        & \includegraphics[width=4.5cm, valign=m]{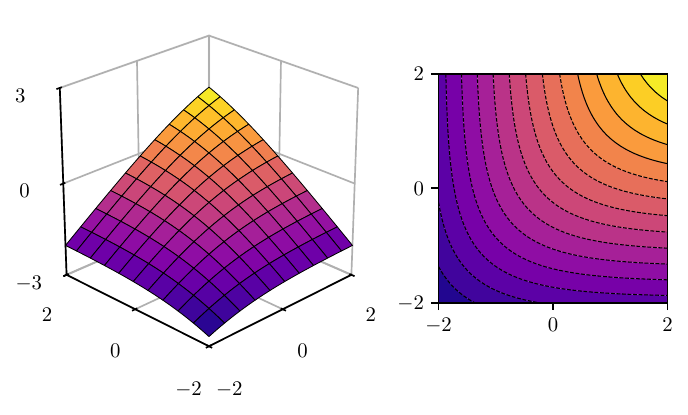} 
        & \includegraphics[width=2cm, valign=m]{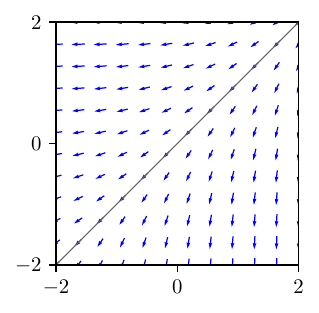}
        \\
      \bottomrule
    \end{tblr}%
\end{table}
\begin{table}
    \small
    \centering
    \caption{Surface and contour plots with gradients of the \acs{STL} conjunction operator for $\nu\in\{1,2,10\}$ and the \acs{QLL} conjunction operator for $p\in\{1,2,10\}$.
    Notice the convergence to the plot and gradients of $\max$ (first column, first row in \cref{tab:conjunction-disjunction}) for both, as expected.
    Observe also how \acs{STL} contour lines are slightly non-convex, which leads to gradients which are slightly misaligned to those of $\max$ in the central part of the plot, as seen in \cref{tab:QLL_STL_grad_vs_max}.}
    \label{tab:QLL_STL_conjunction}
    \begin{tblr}
    {
      width=\textwidth,
      colspec={X[2, c, mode=text]X[c, mode=text]X[2, c, mode=text]X[c, mode=text]},
      row{1}={mode=text, font=\bfseries},
    }
      \toprule
        \SetCell[r=1, c=2]{c} \acs{STL} & & \SetCell[r=1, c=2]{c} \acs{QLL} & \\
      \cmidrule[lr]{1-2} \cmidrule[lr]{3-4}
        Surface & Gradient & Surface & Gradient \\
      \midrule
        \includegraphics[width=4.5cm, valign=m]{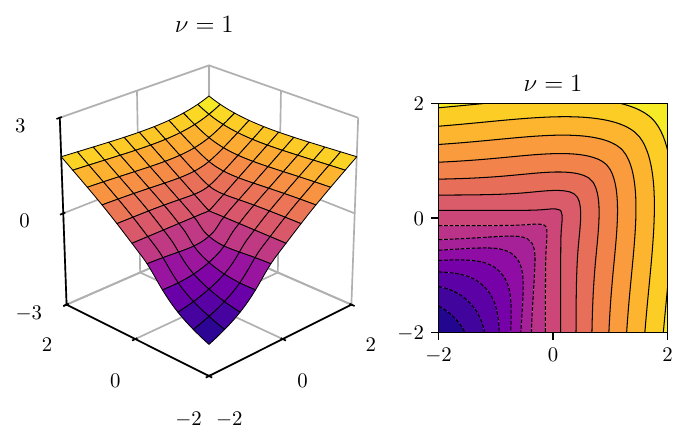} & \includegraphics[width=2cm, valign=m]{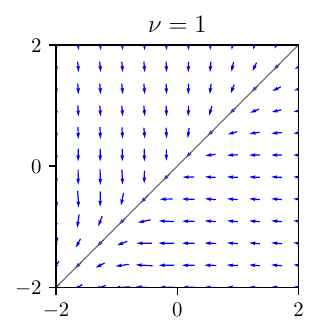} & \includegraphics[width=4.5cm, valign=m]{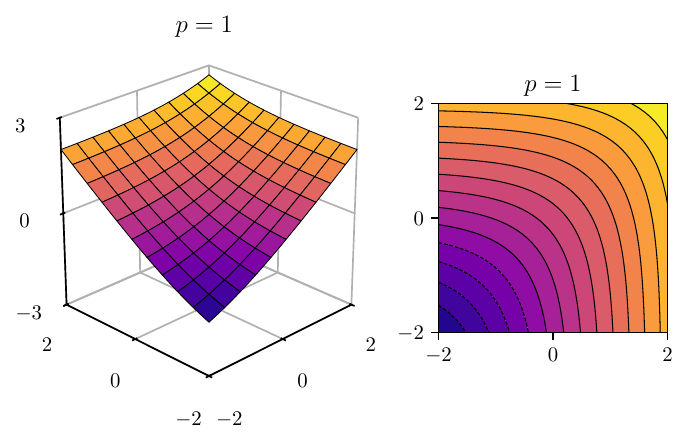} & \includegraphics[width=2cm, valign=m]{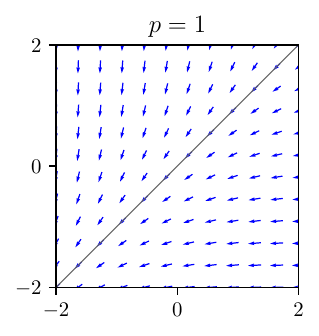}\\
        \includegraphics[width=4.5cm, valign=m]{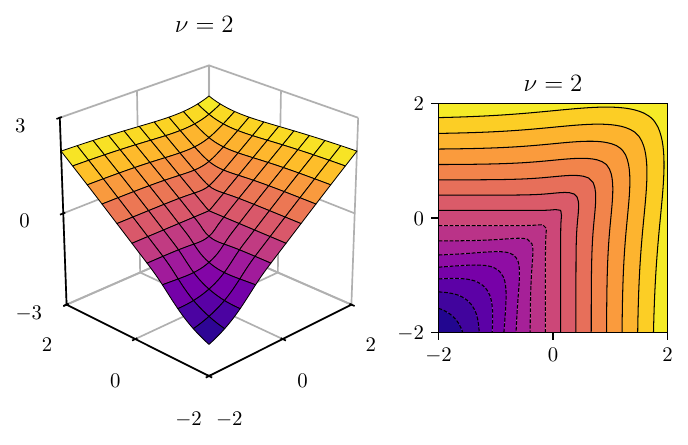} & \includegraphics[width=2cm, valign=m]{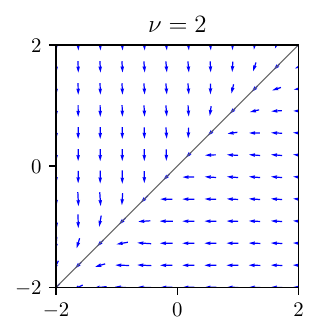} & \includegraphics[width=4.5cm, valign=m]{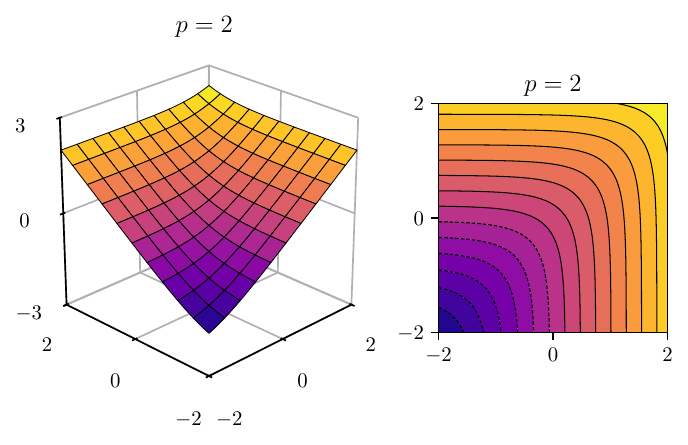} & \includegraphics[width=2cm, valign=m]{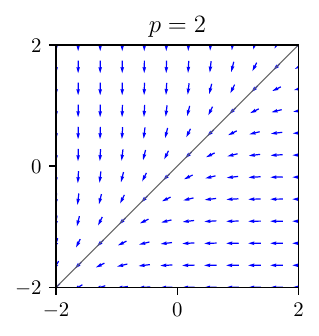}\\
        \includegraphics[width=4.5cm, valign=m]{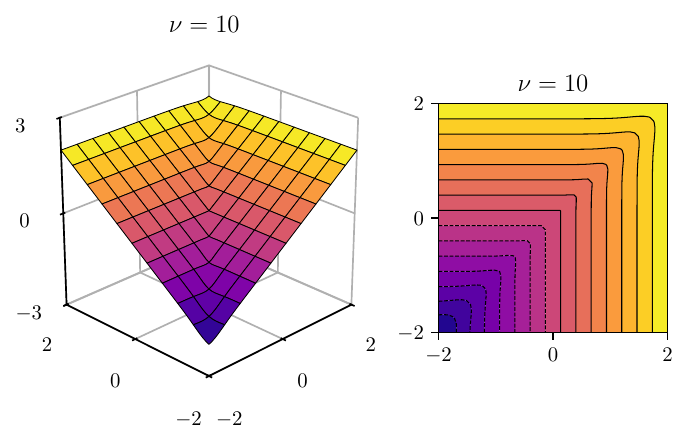} & \includegraphics[width=2cm, valign=m]{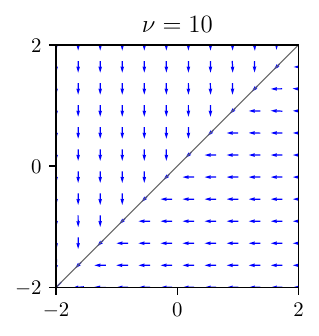} & \includegraphics[width=4.5cm, valign=m]{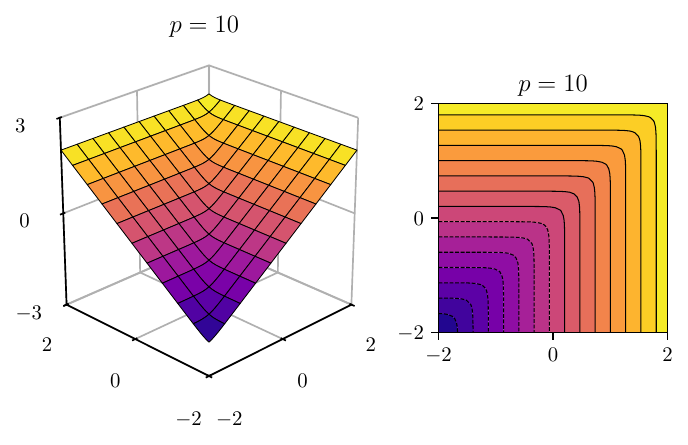} & \includegraphics[width=2cm, valign=m]{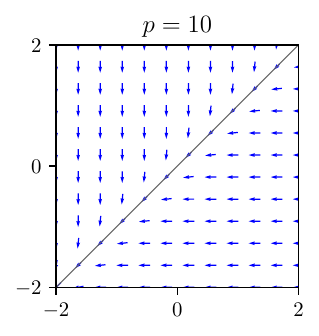}\\
      \bottomrule
    \end{tblr}%
\end{table}

\subsection{\ac{QLL} Formalisations of Properties}\label{subsec:qll-property-formalisations}

\paragraph{Strong Classification Robustness.}
In the following, $m=28 \times 28$ and $n = 10$ for MNIST and Fashion-MNIST.
For convenience, we also replaced data points $(\hat{\vec{x}}, c)$ with $(\hat{\vec{x}}, \hat{\vec{y}})$ where $\hat{\vec{y}}$ is the logits of the one-hot distribution corresponding to the true class.

\emph{Strong classification robustness}~\citep{fischerDL2TrainingQuerying2019,casadioNeuralNetworkRobustness2022} requires that the logit corresponding to the true class must be predicted with degree above some threshold $\delta$, informally:
\begin{equation}\label{eq:informal-strong-classification-robustness}
	\text{for all $\vec{x} \in B_\varepsilon(\hat{\vec{x}})$},\ f_\theta(\vec{x})_{c} \geq \delta.
\end{equation}
Note that $c$, in fact, depends on $\hat{\vec{x}}$, and thus one cannot be translated as a valid formula in the grammar we described in~\cref{def:qll-syntax}.
However, in \ac{QLL} one can instead use the following formula, where $L = \{0, \ldots, 9\}$ is the set of MNIST/Fashion-MNIST labels:
\begin{equation}\label{eq:strong-classification-robustness}
	\text{Strong classification robustness} :=\ \bigwedge_{c \in L} \hat{y}_c \mulimp (\delta \mulimp y_c).
\end{equation}
The formula above thus says `for every class $c$, if it is the true class for the data point under consideration, then the network must predict $c$ with degree at least $\delta$'.

As a sanity-check, the additive semantics of \eqref{eq:strong-classification-robustness} at $\hat{\vec{x}}$ equals the semantics of $\delta \mulimp y_c$ for $c$ the true class of $\hat{\vec{x}}$:
\begin{equation}\label{eq:scr-simplified}
	\sem{\text{Strong classification robustness}}(\vec{y}, \hat{\vec{y}}) = \sem{\delta \mulimp y_{c}}(\vec{y}, \hat{\vec{y}}).
\end{equation}
We call this the `simplified' formula.
In this form, it can be handled by \ac{DL2} and \ac{STL} which lack an implication operator but have a comparison one---in the simplified formula, we can indeed interpret $\mulimp$ as $\geq$.

\paragraph{Classification Robustness.}
\emph{Classification robustness}~\citep{casadioNeuralNetworkRobustness2022} is a stronger notion of robustness than strong classification robustness, as the latter only imposes an arbitrary bound on a single logit and does not prevent label changes.
Classification robustness requires that the logit corresponding to the true class should remain the largest under all allowed perturbations, in symbols:
\begin{equation}
	\text{for all $\vec{x} \in B_\varepsilon(\hat{\vec{x}})$},\ \text{for all $i \in L$}, f_\theta(\vec{x})_{c} \geq f_\theta(\vec{x})_i
\end{equation}
In \ac{QLL}, this is formalized as
\begin{equation}\label{eq:classification-robustness}
	\text{Classification robustness} :=\ \bigwedge_{c \in L} \left(\hat{y}_c \mulimp \bigwedge_{i \in L} (y_c \mulimp y_i)\right)
\end{equation}
Similar considerations as above apply for the translation of this to \ac{DL2} and \ac{STL}, with the additional caveat that for \ac{STL} we cannot use $\bigwedge_{i \in L} (y_c \geq y_i)$ due to ill-behaved semantics when one of the conjuncts is $0$, which always occurs for $c=i$.
%Thus for STL, we must tweak the formula to be $\bigwedge_{i \in L\setminus\{c\}} y_c \geq y_i$.

\paragraph{Semantic Relations on Fashion-MNIST.}
A more interesting constraint concerning semantic relations between classes can be encoded on Fashion-MNIST.
Here we partition the labels $L$ into sets $C=\{0,2,3,4,6\}$ for \emph{clothing} items (t-shirt/top, pullover, dress, coat, shirt) and $F=\{5,7,9\}$ for \emph{footwear} (sandal, sneaker, ankle boot).
Then, given a data point $(\hat{\vec{x}}, c)$ with true class $c$, our specification is to require the logit corresponding to the ground-truth to exceed every logit of the opposite class, informally:
\begin{equation}
 	\begin{dcases}
 		\text{for all $i \in F$,}\ f_\theta(\vec{x})_{c} \geq f_\theta(\vec{x})_i & \text{if $c \in C$}\\
 		\text{for all $i \in C$,}\ f_\theta(\vec{x})_{c} \geq f_\theta(\vec{x})_i & \text{if $c \in F$}.
 	\end{dcases}
\end{equation}
We can formalize this in \ac{QLL} as follows:
\begin{equation}\label{eq:clothing-footwear}
	\text{Clothing/Footwear} := \left( \bigwedge_{c \in C} \left(\hat{y}_c \mulimp \bigwedge_{i\in F} y_c \mulimp y_i\right) \right) \land \left( \bigwedge_{c \in F} \left(\hat{y}_c \mulimp \bigwedge_{i\in C} y_c \mulimp y_i\right) \right).
\end{equation}

\subsection{Hyperparameters and Experimental Setup}\label{subsec:additional-experimental}
All experiments were trained for $100$ epochs using the AdamW~\citep{loshchilovDecoupledWeightDecay2018} optimiser with a learning rate of \num{1e-3} and weight decay of \num{1e-4}.
In each experiment, all runs were trained under identical fixed hyperparameters, i.e., learning rate, adversarial attack configurations, and loss balancing.
For attacks, we make use of AutoPGD~\citep{croceReliableEvaluationAdversarial2020}, since the effectiveness of standard \ac{PGD} has been shown to strongly rely on good attack parameters~\citep{mosbachLogitPairingMethods2019,croceScalingRandomizedGradientFree2020}.

The experiments were run on a machine with a NVIDIA RTX 4090 with \qty{24}{\giga\byte} of VRAM, an Intel i9-13900KF CPU, and \qty{64}{\giga\byte} of RAM with and additional \qty{250}{\gibi\byte} of swap memory available for formal verification workloads.
Training required approximately \qty{3}{\day} \qty{17}{\hour} of compute time, while formal verification required approximately \qty{8}{\day} \qty{19}{\hour} of compute time.

We combine standard prediction loss $\mathcal{L}_{\mathrm{pred}}$ with the constraint loss $\sem{\phi}$ using the joint optimisation objective from~\cref{eq:combined_optimisation}.
To balance the two objectives, we dynamically set the weighting parameter $\lambda$ in~\cref{eq:combined_optimisation} based on their gradient norms, inspired by multi-task learning techniques such as~\citep{chenGradNormGradientNormalization2018}:
\begin{equation}\label{eq:grad_norm}
	\lambda=\alpha\frac{\lVert\nabla_{\theta}\mathcal{L}_{\mathrm{pred}} \rVert}{\lVert\nabla_{\theta}\sem{\phi} \rVert},
\end{equation}
where $\alpha$ controls the relative influence of the constraint objective.
We fix $\alpha=0.5$ across all experiments to ensure that both objectives contribute comparable optimisation signal independent of the underlying differentiable logic.
We do not perform per-logic hyperparameter tuning.

Since neural network verification suffers from severe scalability issues~\citep{katzReluplexEfficientSMT2017}, we intentionally use compact model architectures (c.f.~\cref{subsec:models-datasets}) with a small number of non-linearities to keep both training and verification computationally tractable.
The goal of our experiments is therefore not to achieve state-of-the-art prediction accuracy, but to study the behaviour of various differentiable logics in a verification-compatible setting.
Nevertheless, we evaluate differentiable logics based on jointly achieving high prediction accuracy and verified constraint satisfaction throughout the experiments (c.f.~\cref{tab:main_results-short,tab:main_results-long}).

\Cref{tab:hyperparameters} displays the hyperparameters used in the training experiments.
\begin{table}
    \small
    \centering
    \caption{Hyperparameters used for the experiments.}
    \label{tab:hyperparameters}
    \begin{tblr}
    {
      width=\textwidth,
      colspec={Q[l, m, mode=text]X[l, m, mode=text]Q[c, m, mode=math]Q[c, m, mode=math]Q[c, m, mode=math]},
      row{1}={mode=text, font=\bfseries},
      row{2}={mode=text},
    }
      \toprule
        \SetCell[r=2]{l} Dataset & \SetCell[r=2]{l} Constraint & \SetCell[r=2]{l} Constraint Parameters & \SetCell[c=2]{c} \acs{PGD}\\
      \cmidrule[lr]{4-5}
        & & & steps & restarts\\
      \midrule
        \SetCell[r=2]{l} MNIST & Strong classification robustness & \epsilon=0.1,\delta=0.7 & \SetCell[r=2]{c} 20 & \SetCell[r=2]{c} 2\\
        & Classification robustness & \epsilon=0.2 & & \\
      \midrule
        \SetCell[r=3]{l} Fashion-MNIST & Strong classification robustness & \epsilon=0.2,\delta=0.6 & \SetCell[r=3]{c} 20 & \SetCell[r=3]{c} 2\\
        & Classification robustness & \epsilon=0.2 & & \\
        & Clothing/Footwear & \varepsilon=0.1 & &\\
        \midrule
        \SetCell[r=2]{l} Dice & Not-Both & \SetCell[r=2]{c} \epsilon=\nicefrac{4}{255} & \SetCell[r=2]{c} 30 & \SetCell[r=2]{c} 3\\
        & Exactly-One & & &\\
      \bottomrule
    \end{tblr}%
\end{table}

\subsection{Evaluation Metrics}\label{subsec:evaluation-metrics}
Following~\citet{casadioNeuralNetworkRobustness2022}, we briefly recap the formal definitions of the three different measures we use to evaluate satisfaction of constraints:
\begin{inparaenum}[(i)]
    \item Does the constraint hold? This is a binary measure, and the answer is either true or false.
    \item If the constraint does not hold, how easy is it for an attacker to find a violation?
    \item If the constraint does not hold, how often does the average user encounter a violation?
\end{inparaenum}
Based on these measures, we define three concrete metrics:
\emph{constraint satisfaction}, \emph{constraint security}, and \emph{constraint accuracy}.

Let $\mathcal{X}$ be the test set, $B_\varepsilon(\hat{\vec{x}}) = \{\vec{x}\in\real^m \mid \lVert\hat{\vec{x}}-\vec{x}\rVert_\infty\le\varepsilon\}$ be the $\epsilon$-ball around $\hat{\vec{x}}$, and $P$ be the right-hand side of the implication of a constraint $\phi$ of the form $\forall \vec{x}\in B_\varepsilon(\hat{\vec{x}})\ldotp P(\vec{x})$.
Let $\mathbbm{1}_{\phi}$ be the standard indicator function which is $1$ if constraint $\phi(\vec{x})$ holds and $0$ otherwise.
The \emph{constraint satisfaction} metric measures the proportion of the (finite) test set for which the constraint holds.
\begin{definition}[\acf{CSat}]\label{def:csat}
  \begin{equation*}
    \text{CSat}(\mathcal{X}) = \frac{1}{|\mathcal{X}|} \sum_{\hat{\vec{x}} \in \mathcal{X}} \mathbbm{1}_{\forall \vec{x} \in B_\varepsilon(\hat{\vec{x}})\ldotp P(\vec{x})}
    = \frac{1}{|\mathcal{X}|} \sum_{\hat{\vec{x}} \in \mathcal{X}} \inf_{\vec{x} \in B_\varepsilon(\hat{\vec{x}})} \mathbbm{1}_{P(\vec{x})}
  \end{equation*}
\end{definition}
In contrast, \emph{constraint security} measures the proportion of inputs in the dataset such that an attack is unable to find an adversarial example for constraint $P$. In our experiments we use the \ac{PGD} attack, yielding a set (usually singleton) of adversarial examples $A(\vec{x}) \subseteq B_\varepsilon(\hat{\vec{x}})$, although in general any sufficiently strong attack can be used.
\begin{definition}[\acf{CSec}]\label{def:csec}
  \begin{equation*}
    \text{CSec}(\mathcal{X}) = \frac{1}{|\mathcal{X}|} \sum_{\hat{\vec{x}} \in \mathcal{X}} \mathbbm{1}_{\neg \exists \vec{x}\in A(\hat{\vec{x}})\ldotp \neg P(\vec{x})}
  \end{equation*}
\end{definition}
Note that since $\neg \exists \vec{x}\in A(\hat{\vec{x}})\ldotp \neg P(\vec{x}) = \forall \vec{x}\in A(\hat{\vec{x}})\ldotp P(\vec{x})$, and since $A(\vec{x}) \subseteq B_\varepsilon(\hat{\vec{x}})$, we know in general that CSat implies CSec but we cannot conclude the converse (since our attack might not find a counterexample even if it exists).

We evaluate \ac{CSec} under two different attack settings:
\begin{inparaenum}[(a)]
	\item \emph{self oracle}: adversarial samples are found using a \ac{PGD} attack with a loss built from the same differentiable logic used in training; and
	\item \emph{\ac{QLL} oracle}: adversarial samples are found using a \ac{PGD} with a \ac{QLL}-based loss for all experiments.
\end{inparaenum}
The latter is not a definite measure of constraint satisfaction (as explained above) but only provides a more uniform evaluation setting, reducing the risk that weaker attack losses overestimate constraint security by failing to find counterexamples.

Finally, \emph{constraint accuracy} estimates the probability of a random user coming across a counter-example to the constraint, usually referred as \emph{$1$ - success rate} in the robustness literature.
Let $S(\hat{\vec{x}},n)$ be a set of $n$ elements randomly uniformly sampled from $B_\varepsilon(\hat{\vec{x}})$. Then constraint accuracy is defined as:
\begin{definition}[\acf{CAcc}]\label{def:cacc}
  \begin{equation*}
    \text{CAcc}(\mathcal{X}) = \frac{1}{|\mathcal{X}|} \sum_{\hat{\vec{x}} \in \mathcal{X}} \left(\frac{1}{n} \sum_{\vec{x} \in S(\hat{\vec{x}},n)} \mathbbm{1}_{P(\vec{x})}\right)
  \end{equation*}
\end{definition}
This is the least reliable metric, as random sampling in the $m$-dimensional space is not likely to find adversarial samples.

\subsection{Datasets and Models}\label{subsec:models-datasets}
We make use of the MNIST~\citep{lecunGradientbasedLearningApplied1998} and Fashion-MNIST~\citep{xiaoFashionMNISTNovelImage2017} consisting of \num{60000} training and \num{10000} test images of size $28\times 28$, with labels in $\{0,\ldots,9\}$.
We use the standard train/test split provided by \texttt{torchvision} and a batch size of $512$.

\begin{wraptable}[8]{R}{0.5\linewidth}
    \centering
    \footnotesize
    \caption{Label distribution in the Dice dataset.}
    \label{tab:dice-label-distribution}
    \begin{tblr}{
        colspec={l*{6}{Q[c, m, wd=0.4cm]}},
        row{1}={font=\bfseries},
    }
        \toprule
        Label & 1 & 2 & 3 & 4 & 5 & 6 \\
        \midrule
        Count & 156 & 181 & 179 & 161 & 159 & 184 \\
        Frequency (\%) & 45.9 & 53.2 & 52.6 & 47.4 & 46.8 & 54.1 \\
        \bottomrule
    \end{tblr}
\end{wraptable}

The Dice dataset is a small custom multi-label dataset consisting of $340$ RGB images of size $28\times 28$.
Each image shows three visible faces of a die.
Labels are multi-hot vectors in $\{0,1\}^6$, where each ground-truth label contains exactly three positive entries.
The dataset was created by randomly rolling and photographing physical dice and manually annotating the visible faces in each image.
As a result, the label distribution is reasonably balanced across the six labels (c.f.~\cref{tab:dice-label-distribution}).
We randomly split the dataset into \qty{80}{\percent} training and \qty{20}{\percent} test data, resulting in $272$ training and $68$ test images.
Given the small training set size, we augment the training images using colour jitter to increase variability.
We use a batch size of $16$.
We formulate the Dice task as a standard multi-label classification problem rather than predicting ordered triples of visible faces, allowing us to express relationships between labels explicitly as constraints.

For MNIST and Fashion-MNIST, we use a small convolutional network with two strided $3\times 3$ convolutional layers with $4$ and $8$ channels, respectively.
The first convolutional layer is followed by a $\operatorname{ReLU}$ activation.
The resulting representation is flattened and passed through a fully-connected layer with a $\operatorname{ReLU}$ activation, followed by a final linear layer mapping to $10$ logits.

The Dice dataset uses a similar architecture, except that the first convolutional layer takes RGB images as inputs, and the final linear layer maps to $6$ logits.

\subsection{Additional Experimental Results}\label{subsec:additional-experimental-results}

\Cref{tab:main_results-long} displays the full results of the experiments we ran.

\begin{table}
  %\centering
  \tiny
  \centerline{
  \begin{talltblr}
  [
    theme=paper,
	caption={Experimental results for training and verifying models on MNIST, Fashion-MNIST, and the Dice using different constraints and differentiable logics. In each experiment, the best-performing configuration with respect to the combination of \acs{PAcc} and \acs{CSat} is highlighted in boldface.},
	label={tab:main_results-long},
    note{a}={The \enquote{unknown} column denotes verification instances that could neither be proved or disproved within a \qty{30}{\second} timeout.}
  ]
  {
	colspec={Q[l, m, cmd=\rotatebox{90}]Q[l, m, cmd=\rotatebox{90}]l*{8}{S[table-format=3.1(2.1)]}},
    rowsep=1pt,
	row{1}={guard, font=\bfseries, mode=text, cmd=},
	row{2}={guard, mode=text},
    row{9,24,34,37,38,41,42,50,54}={font=\bfseries}
  }
	\toprule
    \SetCell[r=2,c=1]{l} Constraint
    & 
    \SetCell[r=2,c=1]{l} Dataset
    &
	  \SetCell[r=2,c=1]{l} Logic
	  &
	  \SetCell[r=2,c=1]{c} {\acs{PAcc} (\unit{\percent})}
	  &
	  \SetCell[r=2,c=1]{c} {\acs{CAcc}$_{\epsilon}$ (\unit{\percent})}
	  &
	  \SetCell[r=1,c=2]{c} {\acs{CSec}$_{\epsilon}$ (\unit{\percent})}
	  &
	  &
	  \SetCell[r=1,c=2]{c} \acs{CSat}$_{\nicefrac{\epsilon}{2}}$ (\unit{\percent})
	  &
	  &
	  \SetCell[r=1,c=2]{c} \acs{CSat}$_{\epsilon}$ (\unit{\percent})
	  &
	  \\
	\cmidrule[lr]{6-7} \cmidrule[lr]{8-9} \cmidrule[lr]{10-11}
	  & & & & & self & \acs{QLL} oracle & verified & unknown\TblrNote{a} & verified & unknown\TblrNote{a}\\
	\midrule
      \SetCell[r=10,c=1]{l} \shortstack{Clothing/Footwear\\($\epsilon=0.1$)} & \SetCell[r=10,c=1]{l} \shortstack{Fashion-\\MNIST} & Baseline & 87.4+-0.5 & 99.9+-0.0 & \text{n/a} & 35.8+-11.1 & 54.0+-11.0 & 12.4+-7.4 & 19.0+-0.0 & 1.8+-2.5 \\
      & & DL2 & 86.9+-0.8 & 99.8+-0.0 & 99.8+-0.0 & 61.8+-7.8 & 85.6+-6.7 & 8.8+-5.8 & 31.0+-19.0 & 36.6+-11.8 \\
    \cmidrule[lr]{3-11}
      & & ${}_1$\acs{STL} & 87.0+-0.7 & 99.9+-0.0 & 99.8+-0.0 & 58.3+-7.3 & 85.0+-8.0 & 10.2+-7.6 & 28.4+-16.0 & 34.4+-10.5 \\
      & & ${}_2$\acs{STL} & 87.0+-0.7 & 99.9+-0.0 & 99.8+-0.0 & 57.7+-7.5 & 82.2+-8.0 & 13.4+-7.0 & 27.2+-15.0 & 35.4+-10.7 \\
      & & ${}_5$\acs{STL} & 87.2+-0.5 & 99.9+-0.0 & 99.8+-0.0 & 56.2+-6.7 & 79.2+-8.0 & 14.6+-8.1 & 25.8+-11.9 & 35.0+-5.5 \\
      & & ${}_{10}$\acs{STL} & 87.1+-0.8 & 99.8+-0.0 & 99.8+-0.0 & 56.1+-6.8 & 81.4+-7.6 & 13.4+-8.0 & 26.6+-11.7 & 34.8+-7.2 \\
    \cmidrule[lr]{3-11}
      & & ${}_1$\acs{QLL} & 84.4+-1.2 &  99.9+-0.0 &  99.4+-0.2 &  99.4+-0.2 &  99.6+-0.5 &  0.2+-0.4 &  97.4+-2.5 &  2.0+-2.4\\
      & & ${}_2$\acs{QLL} & 84.9+-1.3 & 99.9+-0.0 & 99.0+-0.4 & 99.0+-0.4 & 99.6+-0.5 & 0.2+-0.4 & 90.4+-12.6 & 8.2+-11.7 \\
      & & ${}_5$\acs{QLL} & 85.2+-1.7 & 99.9+-0.0 & 99.0+-0.3 & 99.0+-0.3 & 99.8+-0.4 & 0.0+-0.0 & 87.2+-12.3 & 12.0+-11.7 \\
      & & ${}_{10}$\acs{QLL} & 85.3+-1.0 & 99.9+-0.0 & 99.0+-0.3 & 98.9+-0.4 & 99.4+-0.9 & 0.4+-0.9 & 88.0+-11.4 & 10.4+-9.8 \\
    \midrule
      \SetCell[r=14,c=1]{l} \shortstack{Not-Both \\($\epsilon=\nicefrac{4}{255}$)} & \SetCell[r=14,c=1]{l} Dice & Baseline & 83.8+-4.7 & 96.5+-2.9 & \text{n/a} & 56.5+-8.1 & 40.9+-17.8 & 2.9+-2.3 & 12.4+-12.1 & 5.6+-2.4 \\
      && DL2 & 83.7+-1.5 & 97.4+-1.6 & 95.9+-2.4 & 65.0+-12.1 & 44.4+-17.4 & 2.1+-2.5 & 13.5+-16.2 & 4.4+-3.4 \\
    \cmidrule[lr]{3-11}
      & & Gödel & 83.8+-4.7 & 96.8+-1.9 & 95.0+-3.0 & 55.3+-7.1 & 40.9+-17.8 & 2.9+-2.3 & 12.4+-12.1 & 5.6+-2.4 \\
      & & \L ukasiewicz & 83.8+-4.7 & 96.8+-1.9 & 97.4+-2.2 & 55.3+-7.1 & 40.9+-17.8 & 2.9+-2.3 & 12.4+-12.1 & 5.6+-2.4 \\
      & & Product & 83.6+-3.9 & 97.1+-1.8 & 96.8+-2.2 & 57.6+-13.2 & 41.2+-19.4 & 1.2+-1.6 & 12.9+-13.2 & 4.4+-3.7 \\
      & & ${}_2$Yager & 82.9+-2.6 & 98.8+-1.2 & 97.9+-0.8 & 65.6+-12.3 & 45.3+-18.5 & 0.3+-0.7 & 11.8+-11.3 & 6.2+-2.4 \\
    \cmidrule[lr]{3-11}
      & & ${}_1$\acs{STL} & 83.8+-3.5 & 100.0+-0.0 & 96.5+-4.2 & 97.6+-2.2 & 90.6+-9.9 & 0.0+-0.0 & 82.4+-18.9 & 0.3+-0.7 \\
      & & ${}_2$\acs{STL} & 83.6+-1.5 & 100.0+-0.0 & 97.4+-2.4 & 98.2+-1.9 & 92.4+-4.6 & 0.0+-0.0 & 73.8+-16.8 & 4.4+-6.1 \\
      & & ${}_5$\acs{STL} & 83.8+-2.8 & 99.7+-0.7 & 97.1+-2.8 & 97.4+-2.4 & 90.0+-7.9 & 0.3+-0.7 & 73.5+-19.0 & 0.9+-1.3 \\
      & & ${}_{10}$\acs{STL} & 83.8+-2.5 & 100.0+-0.0 & 95.9+-4.3 & 97.1+-4.3 & 88.8+-7.6 & 0.9+-1.3 & 71.5+-19.4 & 2.1+-2.5 \\
    \cmidrule[lr]{3-11}
      & & ${}_1$\acs{QLL} & 84.4+-5.2 & 100.0+-0.0 & 97.9+-2.2 & 96.8+-2.6 & 91.8+-3.8 & 0.6+-0.8 & 77.9+-3.9 & 1.5+-1.5 \\
      & & ${}_2$\acs{QLL} & 84.1+-3.3 &  99.7+-0.7 &  98.8+-1.9 &  98.5+-2.5 &  93.8+-5.5 &  0.3+-0.7 &  82.9+-7.2 &  1.2+-0.7\\
      & & ${}_5$\acs{QLL} & 82.9+-3.9 & 100.0+-0.0 & 95.6+-2.9 & 95.3+-2.8 & 88.8+-6.2 & 0.9+-1.3 & 69.7+-17.4 & 2.4+-3.0 \\
      & & ${}_{10}$\acs{QLL} & 83.7+-2.5 & 99.7+-0.7 & 97.4+-2.8 & 97.4+-2.2 & 90.3+-7.4 & 0.6+-0.8 & 72.1+-11.9 & 2.6+-2.6 \\
    \midrule 
      \SetCell[r=8,c=1]{l} \shortstack{Exactly-One\\($\epsilon=\nicefrac{4}{255}$)} & \SetCell[r=8,c=1]{l} Dice & Baseline & 83.8+-4.7 & 90.6+-4.6 & \text{n/a} & 25.6+-7.2 & 27.6+-14.5 & 1.2+-1.2 & 6.2+-6.2 & 5.6+-3.0 \\
      & & DL2 & 83.7+-2.1 & 93.8+-2.4 & 90.0+-4.5 & 34.1+-13.4 & 25.3+-11.5 & 2.1+-3.0 & 3.2+-3.0 & 6.5+-6.7 \\
    \cmidrule[lr]{3-11}
      & & Gödel & 83.4+-1.6 & 94.1+-3.9 & 85.8+-4.5 & 24.5+-9.5 & 30.1+-11.4 & 3.7+-1.0 & 2.9+-4.2 & 10.3+-2.1 \\
      & & \L ukasiewicz & 84.4+-3.2 & 92.2+-5.6 & 92.2+-5.2 & 27.0+-6.6 & 31.9+-8.4 & 1.0+-0.8 & 5.9+-6.2 & 8.1+-3.1 \\
      & & Product & 84.2+-1.9 & 95.1+-2.2 & 92.2+-3.1 & 30.9+-10.6 & 30.9+-6.4 & 1.0+-0.8 & 3.4+-3.4 & 10.3+-2.9 \\
      & & ${}_2$Yager & 85.9+-2.1 & 96.1+-2.2 & 91.2+-6.7 & 34.3+-8.5 & 28.9+-11.9 & 2.0+-0.8 & 3.4+-3.1 & 6.4+-0.8 \\
    \cmidrule[lr]{3-11}
      & & ${}_5$\acs{STL} & 82.9+-3.4 & 97.4+-2.2 & 41.8+-12.7 & 44.7+-16.1 & 34.7+-16.6 & 1.8+-2.4 & 10.6+-9.1 & 4.4+-2.1 \\
      & & ${}_5$\acs{QLL} & 82.5+-4.4 &  95.9+-3.0 &  51.5+-14.4 &  49.7+-15.3 &  42.4+-14.7 &  1.8+-1.2 &  14.7+-10.3 &  5.9+-5.4\\
     \midrule
      \SetCell[r=8,c=1]{l} \shortstack{Strong classification\\robustness ($\epsilon=0.1$)} &\SetCell[r=4,c=1]{l} MNIST & Baseline & 97.6+-0.4 & 99.9+-0.1 & \text{n/a} & 19.3+-18.1 & 66.6+-7.5 & 11.6+-13.7 & 2.0+-1.9 & 14.2+-18.1 \\
      & & DL2 & 97.6+-0.4 & 100.0+-0.0 & 100.0+-0.0 & 62.7+-19.5 & 95.6+-2.7 & 3.0+-2.4 & 24.2+-10.3 & 29.2+-19.8 \\
      & & ${}_5$\acs{STL} & 97.5+-0.3 &  100.0+-0.0 &  100.0+-0.0 &  100.0+-0.0 &  100.0+-0.0 &  0.0+-0.0 &  93.4+-9.4 &  6.6+-9.4\\
      & & ${}_5$\acs{QLL} & 97.5+-0.3 &  100.0+-0.0 &  100.0+-0.0 &  100.0+-0.0 &  100.0+-0.0 &  0.0+-0.0 &  93.4+-9.4 &  6.6+-9.4\\
    \cmidrule[lr]{2-11}
      & \SetCell[r=4,c=1]{l} \shortstack{Fashion-\\MNIST} & Baseline & 87.4+-0.5 & 99.6+-0.2 & \text{n/a} & 1.5+-1.5 & 41.0+-15.0 & 9.8+-7.8 & 0.0+-0.0 & 0.6+-0.9 \\
      & & DL2 & 87.0+-0.5 & 100.0+-0.0 & 100.0+-0.0 & 39.4+-16.0 & 85.0+-7.0 & 6.2+-3.9 & 15.2+-9.3 & 17.8+-6.1 \\
      & & ${}_5$\acs{STL} & 86.7+-0.6 &  100.0+-0.0 &  100.0+-0.0 &  100.0+-0.0 &  100.0+-0.0 &  0.0+-0.0 &  99.0+-1.0 &  1.0+-1.0\\
      & & ${}_5$\acs{QLL} & 86.7+-0.6 &  100.0+-0.0 &  100.0+-0.0 &  100.0+-0.0 &  100.0+-0.0 &  0.0+-0.0 &  99.0+-1.0 &  1.0+-1.0\\
    \midrule
	  \SetCell[r=12,c=1]{l} \shortstack{Classification robustness\\($\epsilon=0.2$)} & \SetCell[r=8,c=1]{l} MNIST & Baseline & 97.6+-0.4 & 94.0+-1.9 & \text{n/a} & 0.0+-0.0 & 0.0+-0.0 & 13.8+-23.1 & 0.0+-0.0 & 0.0+-0.0 \\
	  & & DL2 & 97.1+-1.2 & 84.5+-17.5 & 70.4+-34.0 & 23.1+-20.3 & 38.6+-40.9 & 12.2+-14.7 & 10.4+-21.6 & 26.4+-32.8 \\
    \cmidrule[lr]{3-11}
      & & Gödel & 97.5+-0.4 & 95.3+-1.1 & 94.7+-1.6 & 0.3+-0.3 & 0.0+-0.0 & 15.8+-16.8 & 0.0+-0.0 & 0.4+-0.5 \\
      & & \L ukasiewicz & 97.6+-0.3 & 94.6+-1.3 & 92.3+-5.1 & 0.2+-0.1 & 0.0+-0.0 & 13.2+-16.8 & 0.0+-0.0 & 0.2+-0.4 \\
      & & Product & 97.6+-0.4 & 95.3+-1.0 & 94.3+-3.0 & 0.3+-0.2 & 0.0+-0.0 & 12.0+-5.9 & 0.0+-0.0 & 0.5+-0.6 \\
      & & ${}_2$Yager & 97.5+-0.6 & 94.8+-1.2 & 93.3+-2.7 & 0.2+-0.3 & 0.0+-0.0 & 4.8+-3.3 & 0.0+-0.0 & 0.2+-0.5 \\
    \cmidrule[lr]{3-11}
	  & & ${}_5$\acs{STL} & 97.7+-0.5 & 85.0+-18.4 & 68.5+-34.9 & 8.2+-3.4 & 1.4+-2.6 & 23.4+-19.7 & 0.0+-0.0 & 4.0+-6.2 \\
	   & & ${}_5$\acs{QLL} & 96.1+-0.8 &  95.9+-0.8 &  67.7+-4.0 &  67.8+-4.0 &  88.2+-2.7 &  0.8+-1.1 &  46.0+-23.2 &  30.0+-28.8\\
    \cmidrule[lr]{2-11}
      & \SetCell[r=4,c=1]{l} \shortstack{Fashion-\\MNIST} & Baseline & 87.4+-0.5 & 82.0+-0.8 & \text{n/a} & 0.1+-0.1 & 0.0+-0.0 & 0.4+-0.9 & 0.0+-0.0 & 0.0+-0.0 \\
	   & & DL2 & 85.0+-1.7 & 74.3+-6.0 & 62.1+-11.3 & 7.8+-3.9 & 6.2+-13.3 & 34.4+-16.8 & 0.0+-0.0 & 12.0+-12.1 \\
	   & & ${}_5$\acs{STL} & 84.6+-1.1 & 80.5+-3.4 & 70.1+-7.7 & 3.3+-3.2 & 4.4+-8.2 & 22.0+-14.6 & 0.0+-0.0 & 1.6+-1.7 \\
	   & & ${}_5$\acs{QLL} & 79.5+-1.0 & 79.4+-1.0 & 40.1+-2.1 & 40.1+-2.0 & 56.0+-3.1 & 6.4+-2.2 & 23.2+-9.6 & 36.8+-17.7 \\
	\bottomrule
  \end{talltblr}
  }
\end{table}

\end{document}